\documentclass[lettersize,journal,10pt]{IEEEtran}
\usepackage{amsmath,amsfonts}
\usepackage{cite}
\usepackage{algorithm}
\usepackage{algorithmicx}
\usepackage{algpseudocode} 
\usepackage{array}
\usepackage[caption=false,font=normalsize,labelfont=sf,textfont=sf]{subfig}
\usepackage{textcomp}
\usepackage{stfloats}
\usepackage{longtable}
\usepackage{url}
\usepackage{color}
\usepackage{verbatim}
\usepackage{graphicx}
\usepackage{textcomp}
\usepackage{booktabs} % added by cqli
\usepackage{threeparttable}
\usepackage{multirow} % added by cqli
\usepackage{amssymb}
\usepackage{booktabs}
\usepackage{bigstrut}
\usepackage{hyperref}
\usepackage{cleveref}
\usepackage{ulem}
\usepackage{cuted}

\newtheorem{theorem}{\textbf{Theorem}} % added by cqli
\newtheorem{proof}{\textbf{Proof}} % added by cqli
 % added by cqli
\hyphenation{op-tical net-works semi-conduc-tor IEEE-Xplore}
% updated with editorial comments 8/9/2021

\begin{document}

% 一种稀疏高阶多频点矩匹配的多端口模型降阶方法框架
\title{SMP-RCR: A Sparse Multipoint Moment Matching Method for RC Reduction} 

\author{Siyuan Yin, Yuncheng Xu, Lin Liu, Fan Yang,~\IEEEmembership{Member,~IEEE}, Xuan Zeng,~\IEEEmembership{Senior Member,~IEEE}, Chengtao An$^*$ and Yangfeng Su$^*$
\thanks{$^*\ $Corresponding authors: Yangfeng Su, Email: yfsu@fudan.edu.cn and Chengtao An, Email: ancht@empyrean.com.cn}
        \thanks{$^1\ $Siyuan Yin, Yuncheng Xu and Yangfeng Su are with School of Mathematical Sciences, Fudan University, China.}
        \thanks{$^2\ $Lin Liu and Chengtao An are with Empyrean, China.}
        \thanks{$^3\ $Fan Yang and Xuan Zeng are with State Key Lab of Integrated Chips and Systems, College of Integrated Circuits and Micro-Nano Electronics, Fudan University, China.}
}

% The paper headers
% \markboth{Manuscript Submitted to IEEE Transactions on Computer-Aided Design of Integrated Circuits and Systems}%
% {Shell \MakeLowercase{\textit{et al.}}: A Sample Article Using IEEEtran.cls for IEEE Journals}

% \IEEEpubid{0000--0000/00\$00.00~\copyright~2021 IEEE}
% Remember, if you use this you must call \IEEEpubidadjcol in the second
% column for its text to clear the IEEEpubid mark.

\maketitle

\begin{abstract}
In post-layout circuit simulation, efficient model order reduction (MOR) for many-port resistor-capacitor (RC) circuits remains a crucial issue. The current mainstream MOR methods for such circuits include high-order moment matching methods and elimination methods. High-order moment matching methods—characterized by high accuracy, such as \textbf{PRIMA} and \textbf{TurboMOR}—tend to generate large dense reduced-order systems when the number of ports is large, which impairs the efficiency of MOR. Another common type of MOR method for many-port circuits is based on Gaussian elimination, with the \textbf{SIP} method as a representative. The main limitation of this method lies in the inadequate matching of high-order moments. In this paper, we propose a sparse multipoint moment matching method and present comprehensive theoretical analysis results regarding the multi-frequency high-order moment matching property. Meanwhile, to enhance the algorithm's efficiency, sparse control and deflation techniques are introduced to further optimize the algorithm. Numerical experiments demonstrated that, compared to \textbf{SIP}, the accuracy is improved by more than two orders of magnitude at high frequency points without adding many extra linear components. Compared to \textbf{TurboMOR} methods, our method achieves a speed improvement of more than twice while maintaining the same level of precision.
% As the resolution of flat panel display (FPD) increases, the scale of the power grid is becoming monumental, which imposes severe challenges of analysis and storage. In this paper, we take a deep dive into the physical structure of the FPD power grid network and propose a memory-saving storage format. Based on the novel storage format, we present an efficient structural simulator for static power grid analysis of FPD. The novel simulator includes an incomplete stamping algorithm and a fast structural solver. The incomplete stamping algorithm enables the generation of our proposed storage format, while the structural solver is designed around this format. Additionally, we provide a theoretical proof of the structural solver's optimal convergence rate which is independent of the FPD's resolution. Our numerical experiments involving real industrial cases demonstrate that compared to the conventional stamping algorithm, our proposed incomplete stamping algorithm achieves up to 118$\times$ speedup with 74$\times$ less memory; compared to the state-of-art direct solver CHOLMOD, our proposed structural solver attains up to 17$\times$ speedup with at least 5$\times$ less memory. More and more, our proposed structural solver solve an FPD power grid system containing more than  0.23 billion nodes and 14 billion nonzeros within about a quarter of an hour on a single-core machine.
\end{abstract}

\begin{IEEEkeywords}
        Model Order Reduction, Multipoint Moment Matching,  Sparsification, Multi-port RC Reduction
\end{IEEEkeywords}

% 1.5 pages
\section{Introduction}
\label{sec:intro}
% MOR相关背景

% Review 

% 缺陷 （同时满足高阶矩，多频点和稀疏的多端口MOR方法基本没有，同时他们做的还不够极致）

% 我们的贡献

% 主要工作 （1、给出了这个框架，可以将目前的多种类型的MOR方法应用到框架中；2、在框架上进行了更多的优化，使得得到的系统同时满足高精度和稀疏性；3、对框架给出了矩匹配理论分析，对优化的过程提供了相应的误差分析结果）

% 实验结果的总结

% 每一个section做了什么
In post-layout circuit simulation, a large number of parasitic resistors and capacitors are extracted after parameter extraction, which means we need to solve extremely large-scale nonlinear systems. This process is highly time-consuming. Model order reduction methods for nonlinear systems, as listed in \cite{MOR_Review_1, MOR_Review_2}, are used to construct approximate reduced nonlinear systems to speed up post-layout circuit simulation. However, methods such as proper orthogonal decomposition (\textbf{POD}) \cite{POD1, POD2, POD3} or balanced truncation (\textbf{TBR}) \cite{BT, TBR1, TBR2} are both computationally expensive when constructing the reduced model. To address this issue, nonlinear nodes are preserved, and the extracted parasitic resistor-capacitor (\textbf{RC}) networks are reduced using RC reduction (\textbf{RCR}) techniques \cite{RCR1} while ensuring the accuracy remains acceptable.

A kind of \textbf{RCR} method with high accuracy is based on \textbf{Krylov} subspace. This kind of method constructs an orthogonal basis by \textbf{Arnoldi} procedure  or \textbf{Lanczos} procedure and obtains the reduced system by projecting the original state space onto this subspace spanned by the orthogonal basis. \textbf{PRIMA}\cite{PRIMA} is based on the block \textbf{Arnoldi} procedure and \textbf{MPVL}\cite{MPVL} is based on \textbf{Lanczos} procedure, they can both match multiple moments on single point. Another kind of method with high accuracy is multipoint moment matching based methods based on rational \textbf{Lanczos} procedure\cite{IRKA} are proposed in \cite{TBLMOR} and \cite{ARBLMOR}. However, it is important to note that the reduced systems generated by these methods are dense. 

Another method, called \textbf{TurboMOR-RC}, which can match multiple moments, was proposed in \cite{TurboMOR}. The reduced systems obtained by these methods have a block tridiagonal structure. The size of the reduced systems generated by the above methods grows proportionally with the number of ports, leading to a decrease in the efficiency of model order reduction as the number of ports increases.

 A commonly used method for reducing the order of multi-port \textbf{RC} circuits is based on Gaussian elimination, with methods such as \textbf{SIP} \cite{SIP, SIP2, SparseRC}, PACT \cite{PACT}, and \textbf{TICER} \cite{TICER, TICER2, HDTICER} as representatives. These methods yield reduced-order systems whose dimension is independent of the number of ports. However, these methods currently face the challenge of low accuracy. From a theoretical perspective, these methods possess the property of matching up to the first two moments, but this may not always be sufficient in practical simulation processes involving high frequencies.
 
 Aggregation based methods such as \textbf{AMOR}\cite{AMOR} which apply spectral clustering \cite{Spectral_clustering} to aggregate internal nodes and \textbf{RC-in-RC-out} method\cite{RCinRCout} is also efficient for \textbf{RC} circuits with many ports. These methods generate sparse reduced systems without introducing fill-ins or negative elements. However, different from elimination based methods, these methods have no accuracy guarantee for moment match.

In this article, we 1) propose a framework to generate a reduced system with high accuracy for a many-port \textbf{RC} circuit based on elimination at multiple points; 2) present the multipoint moment matching theorem of the framework; 3) propose the implementation techniques including sparsity control and deflation to render the reduced system sparse. Numerical experiments show that, compared to \textbf{SIP}, our method delivers an accuracy improvement of more than two orders of magnitude at high-frequency points without introducing a large number of additional linear components. When benchmarked against \textbf{TurboMOR} methods, our approach reduces the number of non-zero elements by over 60\% and achieves a speedup of more than double, all while maintaining the same level of accuracy and retaining the block tridiagonal structure.

The rest of this paper is organized as follows. In Section 2, we describe the \textbf{MOR} problem for many-port \textbf{RC} circuits and review the \textbf{TurboMOR} algorithm. In Section 3, we introduce our multipoint moment matching model order reduction framework and provide corresponding moment matching theorem. Section 4 introduces implementation techniques for algorithm. Section 5 presents the numerical experiments, where our method is compared with commonly used \textbf{MOR} methods. In Section 6, we draw our conclusions, and in the Appendix, we provide some mathematical proofs and details.

\section{Preliminaries}
\label{sec:pre}
% MOR and Moment Matching
In this section, we mainly introduce the basic concepts of RC reduction and the fundamental notation. To facilitate comparison with the \textbf{TurboMOR} method, we also review the process of the \textbf{TurboMOR} algorithm for \textbf{RC} reduction.
\subsection{RC Reduction and Moment Matching}
The original system for model order reduction is derived from an \textbf{RC} circuit with $n$ nodes, including $p$ ports. By applying modified nodal analysis (\textbf{MNA})\cite{MNA}, the original system can be represented by the following differential algebraic equations:
\begin{equation}\label{original_system}
 \left\{
\begin{aligned}
&C\dot{x}(t) + Gx(t) = Bu(t)\\
&y(t)=B^\top x(t)\\
\end{aligned}
\right.,
\end{equation}
where matrices $G\in\mathbb{R}^{n\times n}$, $C\in\mathbb{R}^{n\times n}$ are symmetric, nonnegative definite, representing the conductance matrix and capacitance matrix, respectively. Vector $x\in\mathbb{R}^n$ contains nodal voltages. Matrix $B\in\mathbb{R}^{n\times p}$ denotes the index of the ports and there exists a permutation matrices $P,Q$ such that
\begin{equation*}
  PBQ = \begin{bmatrix}
         I_p \\
         0 
       \end{bmatrix}.
\end{equation*}
Vector $u\in\mathbb{R}^p$ and $y\in\mathbb{R}^p$ represent port currents and port voltages, respectively. Transfer function is used to describe the input-output relationship and the original transfer function is written as
\begin{equation}\label{H}
  H(s) = B^\top\left(G+sC\right)^{-1}B.
\end{equation}
It can be verified that $H(s)$ is invariant when we permutate the nodes.

\textbf{MOR} for original system (\ref{original_system}) involves finding a transformation matrix $V\in\mathbb{R}^{n\times r}$ and projection matrix $W\in\mathbb{R}^{n\times r}$, and $W^\top V = I_r$ is satisfied usually. Afterwards, the reduced model is generated by the approximation $x(t)\approx V\hat{x}(t)$ and the residuals of the state equations in system (\ref{original_system}) are orthogonal to $W$. To preserve the symmetry of reduced conductance matrix and reduced capacitance matrix, we set $W=V$. Therefore, the reduced system can be written as 
\begin{equation}
	\label{reduced_system}
	\left\{
	\begin{aligned}
     &\widehat{C}\dot{\hat{x}}(t) + \widehat{G}\hat{x}(t) = \widehat{B}u(t)\\
     &\widehat{y}(t)=\widehat{B}^\top \hat{x}(t)\\
	\end{aligned}
	\right.,
\end{equation}
where matrices $\widehat{G} = V^\top GV\in\mathbb{R}^{r\times r},\,\widehat{C}=V^\top CV\in\mathbb{R}^{r\times r}$, vector $\hat{x}\in\mathbb{R}^{r}$, matrix $\widehat{B}\in\mathbb{R}^{r\times p}$, and $r < n$. The transfer function of system (\ref{reduced_system}) is
\begin{equation}
    \label{H_hat}
	\hat{H}(s) = \widehat{B}^{\top}\left(\widehat{G}+s\widehat{C}\right)^{-1}\widehat{B}.
\end{equation}
In order to meet the requirements of circuit simulation, a satisfactory reduced-order system typically needs to satisfy the following properties:
\begin{itemize}
	\item Reduced system approximates original system well;
    \item Reduced system simulates faster than original system;
	\item Some properties of original systems such as passivity and symmetry are preserved;
	\item All the ports should be preserved, which means that matrix $\widehat{B}$ can be seen as the result of removing some zero rows from matrix $B$.
\end{itemize}

To measure the accuracy of the reduced-order system, we introduce the concept of the moment\cite{Moment_Match} of the transfer function. Let $A\equiv G+s_0C$ and expand the transfer function \cref{H} at $s_0\in\mathbb{R}_{+}$:
\begin{equation*}\label{Taylor_Expansion}
  H(s)=\sum\limits_{k=0}^{+\infty}(-1)^{k}B^\top\left(A^{-1}C\right)^kA^{-1}B(s-s_0)^k,
\end{equation*}
the moment of the $k$ th order at $s_0$ is defined as 
\begin{equation*}\label{Original_Moment}
  M_k(s_0)\equiv (-1)^{k}B^\top\left(A^{-1}C\right)^kA^{-1}B.
\end{equation*}
Similarly, we can define the moment of the reduced transfer function (\ref{H_hat}):
\begin{equation*}\label{Reduced_Moment}
  \widehat{M}_k(s_0)\equiv (-1)^{k}\widehat{B}^\top\left(\widehat{A}^{-1}\widehat{C}\right)^k\widehat{A}^{-1}\widehat{B},
\end{equation*}
where $\widehat{A} = \widehat{G}+s_0\widehat{C}$.
It has been proven in \cite{SIP} that the reduced system generated by \textbf{SIP} matches the first 2 moments at $s_0=0$ of original system.

% TurboMOR
\subsection{TurboMOR-RC}
The \textbf{TurboMOR} method further improves the accuracy based on the elimination method, while ensuring that the reduced-order system has a block tridiagonal structure.

Suppose the coefficient matrix of the original system (\ref{original_system}) has the following block partition:
\begin{equation*}
    G = \begin{bmatrix}
        G_{p} & G_{C}^\top\\
        G_{C} & G_{I}
    \end{bmatrix},\vspace{1em}C = \begin{bmatrix}
        C_{p} & C_{C}^\top\\
        C_{C} & C_{I}
    \end{bmatrix},\vspace{1em}B = \begin{bmatrix}
        I_p \\
        0
        \end{bmatrix}.
\end{equation*}
The first step of the algorithm is to completely decouple all internal nodes of the system through congruence transformation. Assume the \textbf{Cholesky} factor of $G_{I}$ is $K$, and the congruence transformation is formulated as
\begin{equation*}
    Q^{(1)} = \begin{bmatrix}
        I_p &\\
        -K^{-\top}K^{-1}G_{C} & K^{-\top}
    \end{bmatrix}.
\end{equation*} Then we have 
\begin{equation*}
    G^{(1)}\equiv Q^{(1)\top} G Q^{(1)}=\begin{bmatrix}
                                           G^{(1)}_{p} &  \\
                                            & I_{n-p} 
                                         \end{bmatrix};
\end{equation*}
\begin{equation*}
     C^{(1)}\equiv Q^{(1)\top} C Q^{(1)}=\begin{bmatrix}
                                           C^{(1)}_{p} & C^{(1)\top}_{C}\\
                                           C^{(1)}_{C} & C^{(1)}_{I}
                                         \end{bmatrix}.
\end{equation*}

After performing the congruence transformation, original system can be seen as the cascade of a system $\Sigma_1^{(1)}$:
\begin{equation*}
      \Sigma_1^{(1)}:\left\{\begin{aligned} &G^{(1)}_{p}x_p^{(1)}(s) + sC^{(1)}_{p}x_p^{(1)}(s) = Bu(s)+u_1^{(1)}(s)\\
      &y(s) = B^{\top} x_p^{(1)}(s)
      \end{aligned}\right.;
\end{equation*}
and a system $\Sigma_1^{(2)}$:
\begin{equation*}
      \Sigma_1^{(2)}:\left\{\begin{aligned} &x_I^{(1)}(s) + sC^{(1)}_{I}x_I^{(1)}(s) = -C_{C}^{(1)}u_2^{(1)}(s)\\
      &y^{(2)}(s) = -C_{C}^{\top} x_I^{(1)}(s)
      \end{aligned}\right.,
\end{equation*}
where $u_1^{(1)}(s)=sy^{(2)}(s)$ and $u_2^{(1)}(s) = sx_p^{(1)}(s)$.

The second step is to perform \textbf{QR} decomposition on the matrix $C_{C}^{(1)}$ in the system $\Sigma_2^{(1)}$:
\begin{equation*}
    C_{21}^{(1)} = \widetilde{Q}_2\begin{bmatrix}
        R^{(2)}\\
        0
    \end{bmatrix}.
\end{equation*} Then we can construct orthogonal transformation $Q^{(1)}$:
\begin{equation*}
    Q^{(2)}=\begin{bmatrix}
              I_p &  \\
                 & \widetilde{Q}^{(2)} 
            \end{bmatrix}.
  \end{equation*}The coefficient matrix of the original system after orthogonal transformation is
  \begin{equation*}
    G^{(2)}\equiv Q^{(2)\top} G^{(1)} Q^{(2)}=\begin{bmatrix}
                                           G^{(1)}_{p} &  &\\
                                            & I_p & \\
                                           & & I_{n-2p}
                                         \end{bmatrix},
  \end{equation*}
  \begin{equation*}
    C^{(2)}\equiv Q^{(2)\top} C^{(1)} Q^{(2)}=\begin{bmatrix}
                                           C^{(1)}_{p} & R^{(2)\top} &\\
                                           R^{(2)} & C^{(2)}_{p} & C^{(2)\top}_{C}\\
                                           &C^{(2)}_{C} & C^{(2)}_{I}
                                         \end{bmatrix}.
  \end{equation*}

  Repeating the process of the second step for another $r-2$ times and retaining only the first $rp$ nodes yields the final reduced-order system:
  \begin{figure*}
    \begin{equation}\label{Reduced_System_Turbo}
     \begin{bmatrix}
      G^{(1)}_{p} &  &  &  \\
       & I_{p}  &  \\
       &  & \ddots &  \\
       &  &  & I_{p}
    \end{bmatrix} \begin{bmatrix}
          x_p^{(1)}(s) \\
          x_p^{(2)}(s) \\
          \vdots\\
          x_p^{(r)}(s) 
        \end{bmatrix} + s\begin{bmatrix}
      C^{(1)}_{p} & R^{(2)\top} &  &  \\
      R^{(2)} & C^{(2)}_{p} & \ddots &  \\
       & \ddots & \ddots & R^{(r)\top} \\
       &  & R^{(r)} & C^{(r)}_{p}
    \end{bmatrix}\begin{bmatrix}
          x_p^{(1)}(s) \\
          x_p^{(2)}(s) \\
          \vdots\\
          x_p^{(r)}(s) 
        \end{bmatrix} = \begin{bmatrix}
          B \\
          0 \\
          \vdots\\
          0
    
        \end{bmatrix}u(s)\\
      \end{equation}
    \end{figure*}
The reduced-order system (\ref{Reduced_System_Turbo}) satisfies the following properties:
\begin{itemize}
	\item The first $2r$ moments are matched;
    \item $G$ is block diagonal and $C$ is block tridiagonal;
	\item The reduced order is $rp$;
	\item All the ports are preserved.
\end{itemize}
It can be seen that the order of the reduced-order system is proportional to the number of ports; therefore, for multi-port systems, the order reduction efficiency of this method will decrease.

\section{Multipoint Moment Matching Model Order Reduction Framework}
\label{sec:framework}
In this section, we introduce the framework of our proposed multipoint moment matching method. We transform the original system into an equivalent structure consisting of a series connection of multiple small systems and one large system via a series of congruence transformations and orthogonal transformations. Then we obtain our reduced-order system by neglecting the large system. 

Assume that all of the ports are reordered first and to make the input-output matrices can be represented by
    \begin{equation}\label{B}
      B = \begin{bmatrix}
            B^{(1)} \\
            0 
          \end{bmatrix}.
    \end{equation} Then the conductance matrix and the capacitance matrix are represented by
    \begin{equation}\label{GC}
      G=\begin{bmatrix}
          G_{p} & G_{C}^\top \\
          G_{C} & G_{I} 
        \end{bmatrix},\vspace{1em}C=\begin{bmatrix}
                                       C_{p} & C_{C}^\top \\
                                       C_{C} & C_{I} 
                                    \end{bmatrix},
    \end{equation}
    where $G_{p}$ and $C_{p}$ are $p$-by-$p$ submatrices of $G$ and $C$, which represents the connection relationship between ports. The frequency points are 
    \begin{equation}\label{S}
      S=[s_1,\,s_2,\cdots,s_m],
    \end{equation}
    where $s_1,\,s_2,\cdots,s_m$ are nonnegative real numbers.

We first introduce how to perform order reduction for the case of 2 frequency points, then generalize it to the case of multiple points, and present the corresponding moment matching theory.
\subsection{Matching two points}
    Firstly, we construct a congruence transformation to decouple ports and internal nodes at $s_1$. Define matrix $A$:
    \begin{equation*}\label{A}
      A\equiv G+s_1 C=\begin{bmatrix}
          A_{p} & A_{C} \\
          A_{C} & A_{I} 
        \end{bmatrix},
    \end{equation*}
     which is a positive symmetric matrix. Therefore, $A_{I}$ is nonsingular. Then original system  represented by \cref{original_system} is equivalent to the following system:
      \begin{equation}\label{Sigma_shift}
      \Sigma:\left\{\begin{aligned}  &A\begin{bmatrix}
                                                      x_p(s) \\
                                                      x_I(s) 
                                                    \end{bmatrix} + (s-s_1)C\begin{bmatrix}
                                                      x_p(s) \\
                                                      x_I(s) 
                                                    \end{bmatrix} = \begin{bmatrix}
                                                      B^{(1)} \\
                                                      0 
                                                    \end{bmatrix}u(s)\\
      &y(s) = B^{(1)\top} x_p(s)
      \end{aligned}\right..
    \end{equation}
    %    \begin{equation}\label{Sigma_shift}
    %   \Sigma:\left\{\begin{aligned}  &\begin{bmatrix}
    %                                     A_{11} & A_{12} \\
    %                                     A_{21} & A_{22} 
    %                                   \end{bmatrix} \begin{bmatrix}
    %                                                   x_1(s) \\
    %                                                   x_2(s) 
    %                                                 \end{bmatrix} + (s-s_1)\begin{bmatrix}
    %                                     C_{11} & C_{12} \\
    %                                     C_{21} & C_{22} 
    %                                   \end{bmatrix}\begin{bmatrix}
    %                                                   x_1(s) \\
    %                                                   x_2(s) 
    %                                                 \end{bmatrix} \\&= \begin{bmatrix}
    %                                                   B^{(1)} \\
    %                                                   0 
    %                                                 \end{bmatrix}u(s)\\
    %   &y(s) = B^{(1)\top} x_1(s)
    %   \end{aligned}\right.,
    % \end{equation}
We can construct the following congruence transformation:
     \begin{equation*}
       W_1\equiv \begin{bmatrix}
                 I_p &  \\
                 -A^{-1}_{p}A_{C} & I_{n-p} 
               \end{bmatrix}.
     \end{equation*}
     Perform the congruence transformation on the matrices $G$ and $C$, then we can obtain the following matrices:
     \begin{equation*}\label{G_1}
       \widehat{G}^{(1)} = W_1^\top G W_1= \begin{bmatrix}
           \widehat{G}^{(1)}_{p} &  \widehat{G}^{(1)\top}_{C} \\
           \widehat{G}^{(1)}_{C} &  \widehat{G}^{(1)}_{I} 
        \end{bmatrix};
     \end{equation*}   
     \begin{equation*}\label{C_1}
       \widehat{C}^{(1)} = W_1^\top C W_1= \begin{bmatrix}
           \widehat{C}^{(1)}_{p} &  \widehat{C}^{(1)\top}_{C} \\
           \widehat{C}^{(1)}_{C} &  \widehat{C}^{(1)}_{I} 
        \end{bmatrix},
     \end{equation*} 
     where 
     \begin{equation*}\label{G_11}
         \widehat{G}^{(1)}_{p} = G_{p} - G_{C}^\top A^{-1}_{I}A_{C} - A_{C}^\top A^{-1}_{I}G_{C} + A_{C}^\top A^{-1}_{I}G_{I}A^{-1}_{I}A_{C}; 
     \end{equation*}
     \begin{equation*}\label{C_11}
         \widehat{C}^{(1)}_{p} = C_{p} - C_{C}^\top A^{-1}_{I}A_{C} - A_{C}^\top A^{-1}_{I}C_{C} + A_{C}^\top A^{-1}_{I}C_{I}A^{-1}_{I}A_{C}; 
     \end{equation*}
     \begin{equation*}\label{G_21}
         \widehat{G}^{(1)}_{C} = G_{C} - G_{I}A^{-1}_{I}A_{C};
     \end{equation*}
     \begin{equation*}\label{C_21}
         \widehat{C}^{(1)}_{C} = C_{C} - C_{I}A^{-1}_{I}A_{C}. 
     \end{equation*}
     They satisfy the following two equations:
     \begin{equation}\label{A_111}
         \widehat{A}_{p}^{(1)}\equiv \widehat{G}^{(1)}_{p} + s_1\widehat{C}^{(1)}_{p} = A_{p} - A_{C}^\top A_{I}^{-1}A_{C};
     \end{equation}
     \begin{equation}\label{decouple}
         \widehat{G}^{(1)}_{C} + s_1\widehat{C}^{(1)}_{C}= 0.   
     \end{equation}
     Equation (\ref{decouple}) implies that the ports and internal nodes are decoupled at $s_1$ via the congruence transformation $W_1$. In elimination-based method such as \textbf{SIP}, all internal nodes are neglected. To achieve higher precision, we need to preserve additional internal nodes.

Secondly, we need to construct a orthogonal transformation to determine which internal nodes to preserve. We assume that $\widehat{C}^{(1)}_{C}$ has the following decomposition:
     \begin{equation*}\label{QR_C_21}
       \widehat{C}^{(1)}_{C} = Q^{(2)}\begin{bmatrix}
                                     B^{(2)} \\
                                     0 
                                   \end{bmatrix},
     \end{equation*} 
     where $Q^{(2)}$ is orthogonal matrix and $B^{(2)}$ is an $p_2$-by-$p$ matrix and $p_2 \leq p$. The specific decomposition method will be described in detail in \cref{sec:optimization}. Afterwards, we can define the orthogonal transformation $Q_2$:
     \begin{equation*}\label{Q_i}
       Q_2\equiv\begin{bmatrix}
                  I_p &  \\
                   & Q^{(2)} 
                \end{bmatrix}.
     \end{equation*}
 Define matrices
 \begin{equation*}
     G^{(2)}\equiv Q^{(2)\top} \widehat{G}_{I}^{(1)}Q^{(2)};
 \end{equation*}
 \begin{equation*}
     C^{(2)}\equiv Q^{(2)\top} \widehat{C}_{I}^{(1)}Q^{(2)};
 \end{equation*}
 \begin{equation*}
     B_2\equiv \begin{bmatrix}
                                     B^{(2)} \\
                                     0 
                                   \end{bmatrix}.
 \end{equation*}
 % We block matrices $G_{22}^{(1)}$ and $C_{22}^{(1)}$ as follows:
 %     \begin{equation}
 %       G_{22}^{(1)}=\begin{bmatrix}
 %                      \widetilde{G}^{(1)}_{11} & \widetilde{G}^{(1)}_{12} \\
 %                      \widetilde{G}^{(1)}_{21} & \widetilde{G}^{(1)}_{22} 
 %                    \end{bmatrix},\vspace{1em}
 %                    C_{22}^{(1)}=\begin{bmatrix}
 %                      \widetilde{C}^{(1)}_{11} & \widetilde{C}^{(1)}_{12} \\
 %                      \widetilde{C}^{(1)}_{21} & \widetilde{C}^{(1)}_{22} 
 %                    \end{bmatrix},
 %     \end{equation}
 %     where the matrices $\widetilde{G}^{(1)}_{11}$ and $\widetilde{C}^{(1)}_{11}$ are both $p_2$-by-$p_2$ matrices. Then we can obtain the corresponding matrices after orthogonal transformation by $Q^{(2)}$:
 %     \begin{equation}\label{G_222}
 %       G_2\equiv Q^{(2)\top} G_{22}^{(1)}Q^{(2)}=\begin{bmatrix}
 %                      \widetilde{G}^{(2)}_{11} & \widetilde{G}^{(2)}_{12} \\
 %                      \widetilde{G}^{(2)}_{21} & \widetilde{G}^{(2)}_{22} 
 %                    \end{bmatrix};
 %     \end{equation}
 %     \begin{equation}\label{C_222}
 %       C_2\equiv Q^{(2)\top} C_{22}^{(1)}Q^{(2)}=\begin{bmatrix}
 %                      \widetilde{C}^{(2)}_{11} & \widetilde{C}^{(2)}_{12} \\
 %                      \widetilde{C}^{(2)}_{21} & \widetilde{C}^{(2)}_{22} 
 %                    \end{bmatrix}.
 %     \end{equation}
When we perform the transformation $W_1$ and $Q_2$ on the system represented by \cref{Sigma_shift}, it is equivalent to  \cref{Sigma_shift_2},
\begin{figure*}
         \begin{equation}\label{Sigma_shift_2}
      \Sigma:\left\{\begin{aligned} & \begin{bmatrix}
                                        \widehat{A}^{(1)}_{p} & \\
                                          & G^{(2)} + s_1 C^{(2)}
                                      \end{bmatrix} \begin{bmatrix}
                                                      x^{(1)}_p(s) \\
                                                      x^{(1)}_I(s)\\ 
                                                    \end{bmatrix} + (s-s_1)\begin{bmatrix}
                                        \widehat{C}^{(1)}_{p} & B_2^\top \\
                                        B_2&C^{(2)} 
                                      \end{bmatrix}\begin{bmatrix}
                                                      x^{(1)}_p(s) \\
                                                      x^{(1)}_I(s)\\
                                                    \end{bmatrix} = \begin{bmatrix}
                                                      B^{(1)} \\
                                                      0 \\
                                                    \end{bmatrix}u(s)\\
      &y(s) = B^{(1)\top} x^{(1)}_p(s)
      \end{aligned}\right..
    \end{equation}
\end{figure*}
 where $\widehat{A}^{(1)}_{p}$ is defined as \cref{A_111}. Through equations $u^{(1)}(s)=(s-s_1)y^{(2)}(s)$ and $u_2(s) = (s-s_1)x^{(1)}_p(s)$, system (\ref{Sigma_shift_2}) can be seen as the cascade of a system $\Sigma_1^{(1)}$
    \begin{equation}\label{Sigma_1}
      \Sigma_1^{(1)}:\left\{\begin{aligned} &\widehat{G}^{(1)}_{p}x^{(1)}_p(s) + s\widehat{C}^{(1)}_{p}x^{(1)}_p(s) = B^{(1)}u(s)+u^{(1)}(s)\\
      &y(s) = B^{(1)\top} x^{(1)}_p(s)
      \end{aligned}\right.
    \end{equation}
    and a system $\Sigma_1^{(2)}$
    \begin{equation}\label{Sigma_2}
      \Sigma_1^{(2)}:\left\{\begin{aligned} &G^{(2)}x^{(1)}_I(s) + sC^{(2)}x^{(1)}_I(s) = -B_2u_2(s)\\
      &y_2(s) = -B_2^{\top} x^{(1)}_I(s)
      \end{aligned}\right..
    \end{equation}
Block the state vector $x_I^{(1)}$ as
    \begin{equation*}
        x_I^{(1)}(s)=\begin{bmatrix}
            x_p^{(2)}(s)\\
            x_I^{(2)}(s)
        \end{bmatrix},
    \end{equation*} where the length of $x_p^{(2)}$ is $p_2$. Then system $\Sigma_1^{(2)}$ can also be written as 
    \begin{equation*}
      \Sigma_1^{(2)}:\left\{\begin{aligned} &G^{(2)}\begin{bmatrix}
            x_p^{(2)}(s)\\
            x_I^{(2)}(s)
        \end{bmatrix} + sC^{(2)}\begin{bmatrix}
            x_p^{(2)}(s)\\
            x_I^{(2)}(s)
        \end{bmatrix} = -\begin{bmatrix}
            B^{(2)}\\
            0
        \end{bmatrix}u^{(2)}(s)\\
      &y_2(s) = -B^{(2)\top} x^{(2)}_p(s)
      \end{aligned}\right..
    \end{equation*} Although the nodes within $\Sigma_1^{(2)}$ are all internal nodes in the original system represented by \cref{Sigma_shift}, they can be divided into ports and internal nodes in $\Sigma_1^{(2)}$. We call the ports in $\Sigma_1^{(2)}$ linear ports because they are connected with the nodes within $\Sigma_1^{(1)}$ but neither input-output ports nor connected with nonlinear elements. All such linear ports need to be preserved.
    
    Thirdly, we decouple the linear ports and the internal nodes in $\Sigma_1^{(2)}$ at expansion point $s_2$. Define matrix $A^{(2)}$:
    \begin{equation*}
        A^{(2)}\equiv G^{(2)} +s_2C^{(2)}=\begin{bmatrix}
            A_p^{(2)} & A_C^{(2)\top} \\
            A_C^{(2)} & A_I^{(2)}
        \end{bmatrix}.
    \end{equation*} The congruence transformation is written as 
    \begin{equation*}
        W_2 = \begin{bmatrix}
            I_{p_2}&\\
            -\left(A_I^{(2)}\right)^{-1}A_C^{(2)}& I_{n-p_1-p_2}
        \end{bmatrix},
    \end{equation*} and we can obtain
    \begin{equation*}
        \widehat{G}^{(2)} = W_2^\top G^{(2)} W_2= \begin{bmatrix}
           \widehat{G}^{(2)}_{p} &  \widehat{G}^{(2)\top}_{C} \\
           \widehat{G}^{(2)}_{C} &  \widehat{G}^{(2)}_{I} 
        \end{bmatrix};
    \end{equation*}
    \begin{equation*}
        \widehat{C}^{(2)} = W_2^\top C^{(2)} W_2= \begin{bmatrix}
           \widehat{C}^{(2)}_{p} &  \widehat{C}^{(2)\top}_{C} \\
           \widehat{C}^{(2)}_{C} &  \widehat{C}^{(2)}_{I} 
        \end{bmatrix}.
    \end{equation*}  Perform the congruence transformation on $\Sigma_1^{(2)}$, and the original system can be seen as the cascade of two small systems
    \begin{equation*}
      \Sigma_1^{(1)}:\left\{\begin{aligned} &\widehat{G}^{(1)}_{p}x^{(1)}_p(s) + s\widehat{C}^{(1)}_{p}x^{(1)}_p(s) = B^{(1)}u(s)+u^{(1)}(s)\\
      &y(s) = B^{(1)\top} x^{(1)}_p(s)
      \end{aligned}\right.
    \end{equation*}
    \begin{equation*}
      \Sigma_2^{(1)}:\left\{\begin{aligned} &\widehat{G}^{(2)}_{p}x^{(2)}_p(s) + s\widehat{C}^{(2)}_{p}x^{(2)}_p(s) = B^{(2)}u_2(s)+u^{(2)}(s)\\
      &y^{(2)}(s) = B^{(2)\top} x^{(2)}_p(s)
      \end{aligned}\right.
    \end{equation*}
    and a large system
     \begin{equation*}
      \Sigma_2^{(2)}:\left\{\begin{aligned} &G_I^{(2)}x^{(2)}_I(s) + sC_I^{(2)}x^{(2)}_I(s) = -C_C^{(2)}u_3(s)\\
      &y^{(3)}(s) = -C_C^{(2)\top} x^{(2)}_I(s)
      \end{aligned}\right.,
    \end{equation*}
    where 
    \begin{equation*}
        u_k(s) = (s-s_{k-1})x_p^{(k-1)},\vspace{1em} k=2,3;
    \end{equation*}
    \begin{equation*}
        u^{(k)}(s) = (s-s_k)y^{(k+1)}(s),\vspace{1em}k=1,2.
    \end{equation*}
    
Finally, he reduced system for the two-point case can be obtained by neglecting $\Sigma_2^{(2)}$, which is represented by \cref{Reduced_system_2_point}. From the model order reduction process described above, the two frequency points are not required to be distinct. If the two frequency points satisfy $s_1=s_2$, a reduced system formulated as \cref{Reduced_system_2_point} can still be obtained; the only difference between these two scenarios lies in the moment matching theory, which will be discussed in the next subsection.

    \begin{figure*}[h]
        \begin{equation}\label{Reduced_system_2_point}
      \left\{\begin{aligned} &\begin{bmatrix}
          \widehat{G}_p^{(1)} & \widetilde{B}^{(2)\top}\\
          \widetilde{B}^{(2)} & \widehat{G}_p^{(1)}
      \end{bmatrix}\begin{bmatrix}
            x_p^{(1)}(s)\\
            x_p^{(2)}(s)
        \end{bmatrix} + s\begin{bmatrix}
          \widehat{C}_p^{(1)} & B^{(2)\top}\\
          B^{(2)} & \widehat{C}_p^{(1)}
      \end{bmatrix}\begin{bmatrix}
            x_p^{(1)}(s)\\
            x_p^{(2)}(s)
        \end{bmatrix} = \begin{bmatrix}
            B^{(1)}\\
            0
        \end{bmatrix}u(s)\\
      &y(s) = B^{(1)\top} x^{(1)}_p(s)
      \end{aligned}\right..
    \end{equation}
    \end{figure*}
    
    % \begin{equation}\label{Sigma_2}
    %   \Sigma_1^{(2)}:\left\{\begin{aligned} &\begin{bmatrix}
    %                                     \widetilde{G}^{(2)}_{11} & \widetilde{G}^{(2)}_{12} \\
    %                                     \widetilde{G}^{(2)}_{21} & \widetilde{G}^{(2)}_{22} 
    %                                   \end{bmatrix} \begin{bmatrix}
    %                                                   x^{(1)}_1(s) \\
    %                                                   x^{(2)}_2(s) 
    %                                                 \end{bmatrix} + s\begin{bmatrix}
    %                                     \widetilde{C}^{(2)}_{11} & \widetilde{C}^{(2)}_{12} \\
    %                                     \widetilde{C}^{(2)}_{21} & \widetilde{C}^{(2)}_{22}
    %                                   \end{bmatrix}\begin{bmatrix}
    %                                                   x^{(1)}_1(s) \\
    %                                                   x^{(2)}_2(s) 
    %                                                 \end{bmatrix} \\
    %                                                 = &\begin{bmatrix}
    %                                                   -B^{(2)} \\
    %                                                   0 
    %                                                 \end{bmatrix}u^{(2)}(s)\\
    %   &y^{(2)}(s) = -B^{(2)\top} x^{(1)}_1(s)
    %   \end{aligned}\right..
    % \end{equation}
\subsection{Matching more than two points}
In this subsection, the number of frequency points $m$ satisfies $m\geq 3$. When we obtain the 2 systems $\Sigma_1^{(1)}$ and $\Sigma_2^{(1)}$ as \cref{Sigma_1,Sigma_2}, via additional $m-1$ iterations shown in Appendix \ref{app:process}, the original system is transformed into the $m$ small systems
     \begin{equation}\label{Sigma_11}
      \Sigma_1^{(1)}:\left\{\begin{aligned} &\widehat{G}^{(1)}_{p}x^{(1)}_p(s) + s\widehat{C}^{(1)}_{p}x^{(1)}_p(s) = B^{(1)}u(s)+u^{(1)}(s)\\
      &y(s) = B^{(1)\top} x^{(1)}_p(s)
      \end{aligned}\right.;
    \end{equation}
    \begin{equation}\label{Sigma_21}
      \Sigma_2^{(1)}:\left\{\begin{aligned} &\widehat{G}^{(2)}_{p}x^{(2)}_p(s) + s\widehat{C}^{(2)}_{p}x^{(2)}_p(s) = -B^{(2)}u_2(s)+u^{(2)}(s)\\
      &y^{(2)}(s) = -B^{(2)\top} x^{(2)}_p(s)
      \end{aligned}\right.;
    \end{equation}
    \begin{equation*}
      \vdots
    \end{equation*}
    \begin{equation}\label{Sigma_m1}
      \Sigma_m^{(1)}:\left\{\begin{aligned} &\widehat{G}^{(m)}_{p}x^{(m)}_p(s) + s\widehat{C}^{(m)}_{p}x^{(m)}_p(s) =\\& -B^{(m)}u_m(s)+u^{(m)}(s)\\
      &y^{(m)}(s) = -B^{(m)\top} x^{(m)}_p(s)
      \end{aligned}\right.;
    \end{equation}
    and a large system
     \begin{equation}\label{Sigma_m2}
      \Sigma_m^{(2)}:\left\{\begin{aligned} &\widehat{G}^{(m)}_{I}x^{(m)}_I(s) + s\widehat{C}^{(m)}_{I}x^{(m)}_I(s) = -C_C^{(m)}u_{m+1}(s)\\
      &y^{(m+1)}(s) = B^{(m)\top} x^{(m)}_I(s)
      \end{aligned}\right.,
    \end{equation}
    where the input vectors and output vectors of the $m$ systems satisfy that
    \begin{equation*}
      u_k(s)=(s-s_{k-1})x_p^{(k-1)}(s),\vspace{1em}2\leq k\leq m+1;
    \end{equation*}
    \begin{equation*}
      u^{(k)}(s) = (s-s_k)y^{(k+1)}(s),\vspace{1em}1\leq k\leq m,
    \end{equation*}
    and \begin{equation*}
        B_m\equiv\begin{bmatrix}
            -B^{(m)}\\
            0
        \end{bmatrix}.
    \end{equation*}
    
    Finally, we neglect the large system $\Sigma_m^{(2)}$ and let \begin{equation*}
      \widetilde{B}^{(k)} = -s_{k-1}{B}^{(k)},\vspace{1em}2\leq k\leq m,
    \end{equation*} then the reduced system has been constructed as \cref{Reduced_System}.
    %we will eliminate the nodes in $\Sigma_{m-1}^{(2)}$ at $s_m$. Suppose the matrix $B^{m}$ is a $p_m\times p_m$ square matrix. Define matrix $A^{(m)}$:
    % \begin{equation*}
    %     A^{(m)} \equiv G^{(m)} + s_m C^{(m)} = \begin{bmatrix}
    %         A^{(m)}_{p} & A^{(m)\top}_{C}\\
    %         A^{(m)}_{C} & A^{(m)}_{I}\\
    %     \end{bmatrix},
    % \end{equation*}
    % where $A^{(m)}_{p}$ is the $p_m$-th order leading principal submatrix of matrix $A^{(m)}$. Then we can construct the projector $V_m$ as in \cite{SIP}:
    % \begin{equation*}
    %     V_m = \begin{bmatrix}
    %         I_{p_m}\\
    %         -\left(A^{(m)}_{I}\right)^{-1}A^{(m)}_{C}
    %     \end{bmatrix},
    % \end{equation*}
    % and obtain reduced conductance matrix $\widehat{G}^{(m)}_{p}$ and capacitance matrix $\widehat{C}^{(m)}_{p}$ in system $\Sigma_{m-1}^{(2)}$:
    % \begin{equation*}
    %     \widehat{G}^{(m)}_{p} = V_m^\top G^{(m)}V_m;
    % \end{equation*}
    %    \begin{equation*}
    %     \widehat{C}^{(m)}_{p} = V_m^\top C^{(m)}V_m;
    % \end{equation*}
    \begin{figure*}[h]
    \begin{equation}\label{Reduced_System}
      \widehat{\Sigma}:\left\{\begin{aligned} & \begin{bmatrix}
                                                  \widehat{G}^{(1)}_{p} & \widetilde{B}^{(2)\top} &  &  \\
                                                  \widetilde{B}^{(2)} & \widehat{G}^{(2)}_{p} & \ddots &  \\
                                                   & \ddots & \ddots & \widetilde{B}^{(m)\top} \\
                                                   &  & \widetilde{B}^{(m)} & \widehat{G}^{(m)}_{p}
                                                \end{bmatrix} \begin{bmatrix}
                                                      x^{(1)}_p(s) \\
                                                      x^{(2)}_p(s) \\
                                                      \vdots\\
                                                      x^{(m)}_p(s) 
                                                    \end{bmatrix} + s\begin{bmatrix}
                                                  \widehat{C}^{(1)}_{p} & B^{(2)\top} &  &  \\
                                                  B^{(2)} & \widehat{C}^{(2)}_{p} & \ddots &  \\
                                                   & \ddots & \ddots & B^{(m)\top} \\
                                                   &  & B^{(m)} & \widehat{C}^{(m)}_{p}
                                                \end{bmatrix}\begin{bmatrix}
                                                      x^{(1)}_p(s) \\
                                                      x^{(2)}_p(s) \\
                                                      \vdots\\
                                                      x^{(m)}_p(s) 
                                                    \end{bmatrix} = \begin{bmatrix}
                                                      B^{(1)} \\
                                                      0 \\
                                                      \vdots\\
                                                      0
                                                
                                                    \end{bmatrix}u(s)\\
      &y(s) = B^{(1)\top} x^{(1)}_p(s)
      \end{aligned}\right..
      \end{equation}
    \end{figure*}
   
\subsection{Moment Matching Analysis}
The property of moment matching is given by the following theorem. Consider the multiset $S = \{s_1, s_2, . . . , s_m\}$. We define $q(s_i)$ as the number of appearances of $s_i$ in $S$.
    \begin{theorem}\label{Moment_Match}
      The reduced system (\ref{Reduced_System}) matches the first $2q(s_i)$ moments of the original system (\ref{original_system}) at $s_i$, $i=1,2,\cdots,m$.
    \end{theorem}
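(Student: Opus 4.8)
The plan is to prove the moment-matching property by exploiting the cascade structure that the framework constructs. The key observation is that the original system $\Sigma$ has been decomposed, via exact congruence and orthogonal transformations, into a chain of small port subsystems $\Sigma_1^{(1)}, \Sigma_2^{(1)}, \dots, \Sigma_m^{(1)}$ coupled to one large tail system $\Sigma_m^{(2)}$, connected through the relations $u_k(s)=(s-s_{k-1})x_p^{(k-1)}(s)$ and $u^{(k)}(s)=(s-s_k)y^{(k+1)}(s)$. Since every transformation is a congruence (which leaves the transfer function invariant) or an orthogonal change of basis, the full cascade reproduces $H(s)$ exactly. The reduced system $\widehat\Sigma$ in \cref{Reduced_System} is obtained by deleting only the tail $\Sigma_m^{(2)}$, equivalently by setting $y^{(m+1)}(s)\equiv 0$. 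Therefore the entire error $H(s)-\widehat H(s)$ is attributable to this single truncation, and I would track how that error propagates back up the chain to the output.

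First I would make the truncation error explicit. Let $\varepsilon(s)$ denote the discrepancy injected at the $m$-th stage when $\Sigma_m^{(2)}$ is dropped; because the coupling into stage $m$ enters through $u^{(m)}(s)=(s-s_m)\,y^{(m+1)}(s)$, and $y^{(m+1)}$ is the output of $\Sigma_m^{(2)}$ driven by $u_{m+1}(s)=(s-s_m)x_p^{(m)}(s)$, the injected error carries a factor $(s-s_m)^2$. I would then show inductively that this factor is "filtered" as it climbs back through stages $m-1,\dots,1$: each stage $\Sigma_k^{(1)}$ relates $x_p^{(k)}$ to the correction coming from below only through $u^{(k)}(s)=(s-s_k)y^{(k+1)}(s)$, and that lower output $y^{(k+1)}$ in turn depends on $x_p^{(k)}$ through $u_{k+1}(s)=(s-s_k)x_p^{(k)}(s)$. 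Thus each intermediate stage at frequency $s_k$ contributes another factor $(s-s_k)^2$ to the error seen at that stage's port variable. The central lemma I would prove is: the error in $x_p^{(1)}(s)$, and hence in $y(s)$, contains a factor $\prod_{k} (s-s_k)^{2\cdot(\text{multiplicity contributed at }s_k)}$; concretely, for a point $s_i$ repeated $q(s_i)$ times along the chain, the accumulated factor is $(s-s_i)^{2q(s_i)}$.

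Having localized the error, I would translate the vanishing order into matched moments. The claim $H(s)-\widehat H(s)=O\!\big((s-s_i)^{2q(s_i)}\big)$ as $s\to s_i$ is, by the definition of $M_k(s_i)$ via the Taylor expansion in $(s-s_i)$ given before \cref{H_hat}, exactly the statement that $M_k(s_i)=\widehat M_k(s_i)$ for $k=0,1,\dots,2q(s_i)-1$, i.e.\ the first $2q(s_i)$ moments agree. I would handle the bookkeeping of repeated points by noting that whenever $s_k=s_i$ for several indices $k$, each such stage is a genuine additional decoupling at $s_i$ (the construction explicitly permits $s_k=s_{k-1}$, as remarked after \cref{Reduced_system_2_point}), so the factors $(s-s_i)^2$ multiply and the exponents add to $2q(s_i)$. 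For the base case I would verify the two-point reduced system \cref{Reduced_system_2_point}: dropping $\Sigma_2^{(2)}$ injects an $(s-s_2)^2$ error into $\Sigma_2^{(1)}$, which passes through $\Sigma_1^{(1)}$ with an additional $(s-s_1)^2$ factor, recovering the known two-moment results at each point and matching the established \textbf{SIP} result $2q(s_0)=2$ when $m=1$.

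The main obstacle I expect is the inductive step that rigorously tracks the $(s-s_k)^2$ factor through the coupled small systems without losing it to cancellation or to hidden $s$-dependence in the transformation blocks $\widehat G_p^{(k)}, \widehat C_p^{(k)}, B^{(k)}$. The delicate point is that these blocks are \emph{constants} (built from the fixed congruence and orthogonal factors at the chosen expansion points), so the only $s$-dependence in the coupling is the explicit $(s-s_k)$ prefactors; I would need to argue carefully that the Schur-complement-style elimination expressing $x_p^{(k)}$ in terms of $u(s)$ and of the tail correction keeps the leading coupling coefficient invertible at $s=s_k$ (guaranteed because $A_p^{(k)}=\widehat G_p^{(k)}+s_k\widehat C_p^{(k)}$ is a genuine Schur complement of the positive-definite shifted matrix $A^{(k)}$, hence nonsingular). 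Establishing this nonsingularity at each stage, and thereby confirming that no factor of $(s-s_k)$ is spuriously canceled, is the technical heart of the argument.
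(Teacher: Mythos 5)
Your proposal is correct and follows essentially the same route as the paper's Appendix B proof: both express the cascade via the recurrence $H^{(i)}(s)=B^{(i)\top}\bigl(\widehat{G}_p^{(i)}+s\widehat{C}_p^{(i)}-(s-s_i)^2H^{(i+1)}(s)\bigr)^{-1}B^{(i)}$, start from the two-moment (SIP-type) matching at stage $m$, and use the resolvent identity to show each stage at $s_i$ multiplies the truncation error by $(s-s_i)^2$, accumulating to $(s-s_0)^{2q(s_0)}$. Your explicit attention to the nonsingularity of $A_p^{(k)}=\widehat{G}_p^{(k)}+s_k\widehat{C}_p^{(k)}$ as a Schur complement of a positive definite matrix is a detail the paper leaves implicit, but it does not change the argument.
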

    \begin{proof}
      The proof of the theorem first leverages the moment matching property of the elimination method to derive the moment matching property of the $m$-th system. Then, by leveraging the recursive relationship between two adjacent systems, the error of the transfer function of the first system is derived. The details are shown in Appendix \ref{app:convergence}.
    \end{proof}

    \Cref{Moment_Match} shows that when the frequency points satisfy
    \begin{equation*}
        s_1=s_2=\cdots=s_m,
    \end{equation*} the multipoint moment matching method has the same high-order moment matching properties as \textbf{PRIMA} and \textbf{TurboMOR}, thus ensuring its precision. However, this method faces the same problem as \textbf{TurboMOR}, as illustrated in \cref{Spy_G_hat_2,Spy_C_hat_2}. Although the system has a block tridiagonal structure, as the number of nodes and ports increases, each diagonal block becomes a dense matrix. This leads to a significant increase in the number of non-zero elements and reducing the efficiency of model order reduction.
    \begin{figure}[tb]
      \centering
      \includegraphics[width=\linewidth]{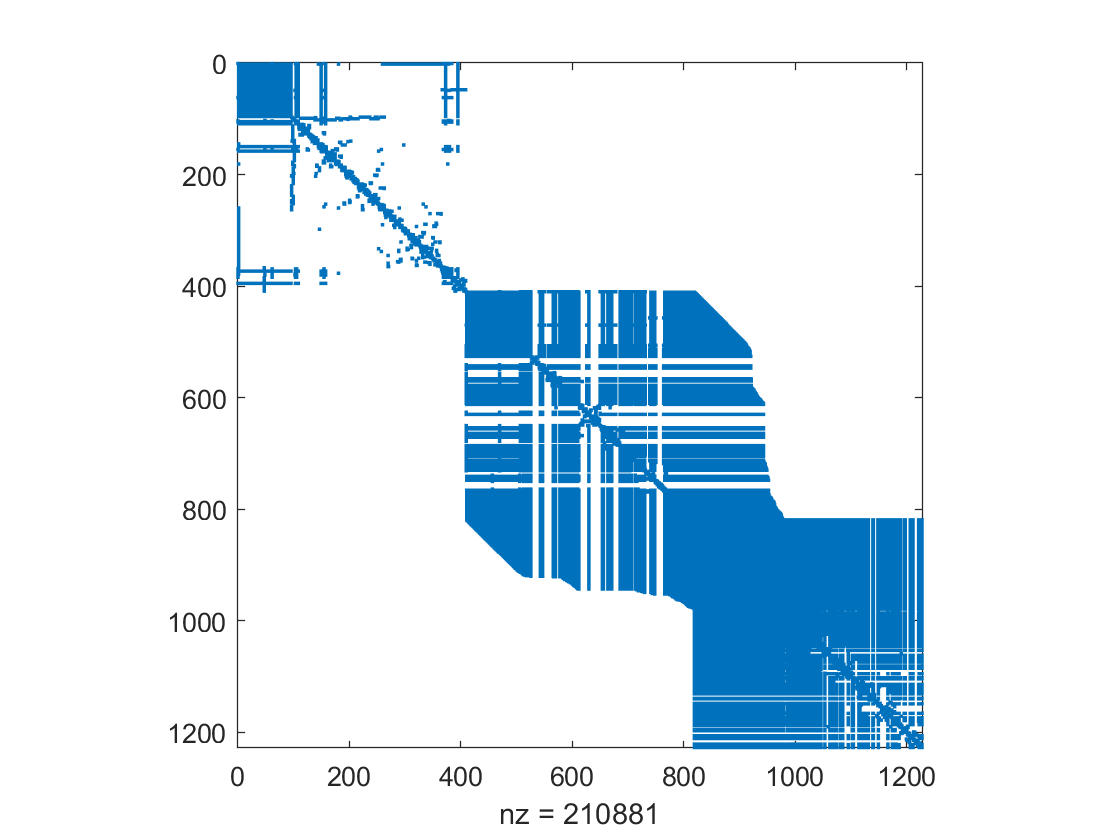}
      \caption{Nonzero Elements of the Reduced Conductance Matrix in ADC\_Net\_304}\label{Spy_G_hat_2}
    \end{figure}
        \begin{figure}[tb]
      \centering
      \includegraphics[width=\linewidth]{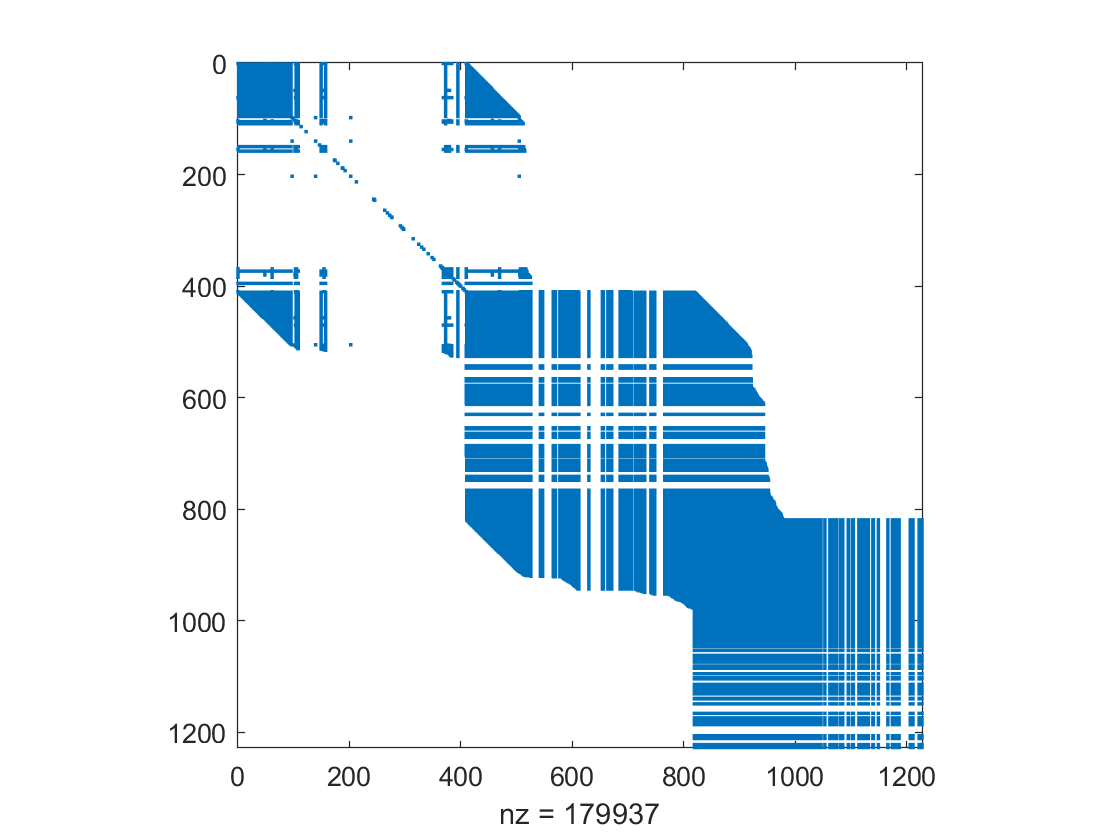}
      \caption{Nonzero Elements of the Reduced Capacitance Matrix in ADC\_Net\_304}\label{Spy_C_hat_2}
    \end{figure}

% In this section, we firstly analyze the structure of the conductance matrix and figure out its mathematical formula. Then based on the formula, we present a mixed storage format for the conductance matrix which can save the memory largely.
% \label{sec:struct}
% \input{content/mssf.tex}

%1 pages
\section{Algorithm Implementation}
\label{sec:optimization}
This section primarily addresses the densification issue of the reduced-order matrix mentioned at the end of the previous section, and introduces a sparsity control technique and a deflation technique to further improve the multipoint moment matching algorithm.
\subsection{Sparsity Control}
The densification of the first diagonal blocks $G_{p}^{(1)},\vspace{0.5em}C_{p}^{(1)}$ is caused by the elimination process that retains no internal nodes.To address this issue, we consider eliminating internal nodes one by one in the \textbf{AMD}\cite{AMD} order, and stopping the elimination when the number of non-zero elements reaches a certain threshold. In this paper, this threshold is set to 20 times the dimension of the reduced matrix. This process is illustrated in Algorithm 1. Among the input variables, $\eta$ is an artificial variable, which means that the elimination process should stop when the non-zero elements of matrix $\widehat{G}+\widehat{C}$ exceeds $\eta$ times its size. In the following numerical experiment, $\eta$ takes 20.

    \begin{table}[h]
	\centering
	\begin{tabular*}{\hsize}{l}
		\hline
		\textbf{Algorithm 1 Sparse SIP}\\
		\hline
		\textbf{Input} $G$, $C$, $p$, $s$, $\eta$.\\
        1: All nodes are reordered according to the \textbf{AMD} order;\\
        2: Place all ports first;\\
        3: Compute $\widehat{A}= A = G + sC$;\\
        4: $W=I_n$, $\widehat{G}=G$, $\widehat{C}=C$;\\
        5: $k=n$;\\
        6: while $k>p$\\
        7:\qquad if ${\rm nnz}(\widehat{G}+\widehat{C}) > \eta k$\\
        8:\qquad\qquad break;\\
        9:\qquad End if\\
        10:\qquad $\widetilde{A} = \widehat{A}(1:{\rm end}-1,1:{\rm end}-1)$;\\
        11:\qquad $\widetilde{C} = \widehat{C}(1:{\rm end}-1,1:{\rm end}-1)$;\\
        12:\qquad $a=\widehat{A}(1:{\rm end}-1,{\rm end})/\widehat{A}({\rm end},{\rm end})$;\\
        13:\qquad $a_{nn} = \widehat{A}({\rm end},{\rm end})$;\\
        14:\qquad $c=\widehat{C}(1:{\rm end}-1,{\rm end})$;\\
        15:\qquad $c_{nn} = \widehat{C}({\rm end},{\rm end})$;\\
        16:\qquad $\widehat{A} = \widetilde{A}-a_{nn}aa^\top$;\\
        17:\qquad $\widehat{C} = \widetilde{C}-ac^\top -ca^\top +c_{nn}aa^\top$;\\
        18:\qquad $\widehat{G} = \widehat{A} - s\widehat{C}$;\\
        19:\qquad Update $W$: 
        $W = W\begin{bmatrix}
                I_{k-1} & & \\
                -a^\top & 1 &\\
                & & I_{n-k}\\
            \end{bmatrix}$;\\
        \textbf{Output} $\widehat{G}$, $\widehat{C}$, $W$.\\
		\hline
	\end{tabular*}
\end{table} 

    We assume that the number of the remaining nodes in the first elimination is $p_1\geq p$. For the elimination of the subsequent subsystem $\Sigma_{k-1}^{(2)}$:
    \begin{equation}\label{Sigma_k2}
      \Sigma_{k-1}^{(2)}:\left\{\begin{aligned} & G^{(k)} \begin{bmatrix}
                                                      x^{(k)}_{p}(s) \\
                                                      x^{(k)}_{I}(s) 
                                                    \end{bmatrix} + sC^{(k)}\begin{bmatrix}
                                                      x^{(k)}_{p}(s) \\
                                                      x^{(k)}_{I}(s) 
                                                    \end{bmatrix} =\begin{bmatrix}
                                                      -B^{(k)} \\
                                                      0 
                                                    \end{bmatrix}u^{(k)}(s)\\&y^{(k)}(s) = -B^{(k)\top} x^{(k)}_{p}(s)
      \end{aligned}\right.,
    \end{equation}
    where $G^{(k)}$ and $C^{(k)}$ is defined as \cref{G_k+1,C_k+1}, $B^{(k)}\in\mathbb{R}^{p_{k-1}\times p_{k}}$ and $2\leq k\leq m$. The coefficient matrices is obtained via orthogonal transformation and is therefore dense. At this point, the more points are eliminated, the fewer nonzero elements remain. For these systems, we directly retain only the linear ports.

\subsection{Deflation}
From the previous subsection, we know that the last $m-1$ diagonal blocks of the reduced matrices $\widehat{G}$ and $\widehat{C}$ are dense. Therefore, the fewer linear ports there are in the system represented by \cref{Sigma_k2}, the fewer non-zero elements there are in the matrices $G_{p}^{(k)}$ and $C_{p}^{(k)}$ in \cref{Reduced_System}. If internal nodes are eliminated without sparsity control, matrix $C_{C}^{(1)}$ is $n-p$ by $p$. However, we find out that in many applications, matrix $C_{C}^{(1)}$ is nearly rank deficient. To exploit the rank-deficient property of $C_{C}^{(k)}$, the \textbf{RRQR}\cite{RRQR} algorithm can reveal the rank of $C_{C}^{(k)}$, and the cost is only slightly more than the cost of a regular \textbf{QR} factorization. It computes a permutation $P^{(k)}$ and \textbf{QR} factorization:
\begin{equation*}
    C_C^{(k)}P^{(k)}= Q^{(k+1)}\begin{bmatrix}
        R_{11} & R_{12}\\
        &R_{22}
    \end{bmatrix},
\end{equation*}
where \begin{equation*}
    \left\|R_{22}\right\|_2\leq \delta \left\|R_{11}\right\|_2.
\end{equation*}
In this article, $\delta$ takes $1\times 10^{-6}$, then we have the following decomposition:
\begin{equation*}
       C^{(k)}_{C} = Q^{(k+1)}\begin{bmatrix}
                                     B^{(k+1)} \\
                                     0 
                                   \end{bmatrix},\vspace{1em}k=1,2,\cdots,m-1,
     \end{equation*}
     where
     \begin{equation*}
         B^{(k+1)}\equiv\begin{bmatrix}
        R_{11} & R_{12}\\
    \end{bmatrix}.
     \end{equation*}
 Then we have
\begin{equation*}
    p_m\leq p_{m-1} \leq \cdots \leq p_2 < p.
\end{equation*}
It means that the order of the reduced system represented by \cref{Reduced_System} is less than $mp$. Compared with the system generated by \textbf{TurboMOR}, whose order is exactly $mp$, the impact of an increase in the number of ports on algorithm efficiency will be significantly reduced.

Taking the case ADC\_Net\_27 as an example, this test case has 304 ports. Here, only 2 frequency points are selected:
\begin{equation*}
    S=[0;1\times 10^6];
\end{equation*}
If the deflation technique is not introduced, the reduced-order matrices obtained through multi-frequency-point elimination is shown in \cref{Spy_G_hat_W,Spy_C_hat_W}. When we apply deflation, the reduced-order matrices obtained through multi-frequency-point elimination is shown in \cref{Spy_G_hat_D,Spy_C_hat_D}.
\begin{figure}[tb]
      \centering
      \includegraphics[width=\linewidth]{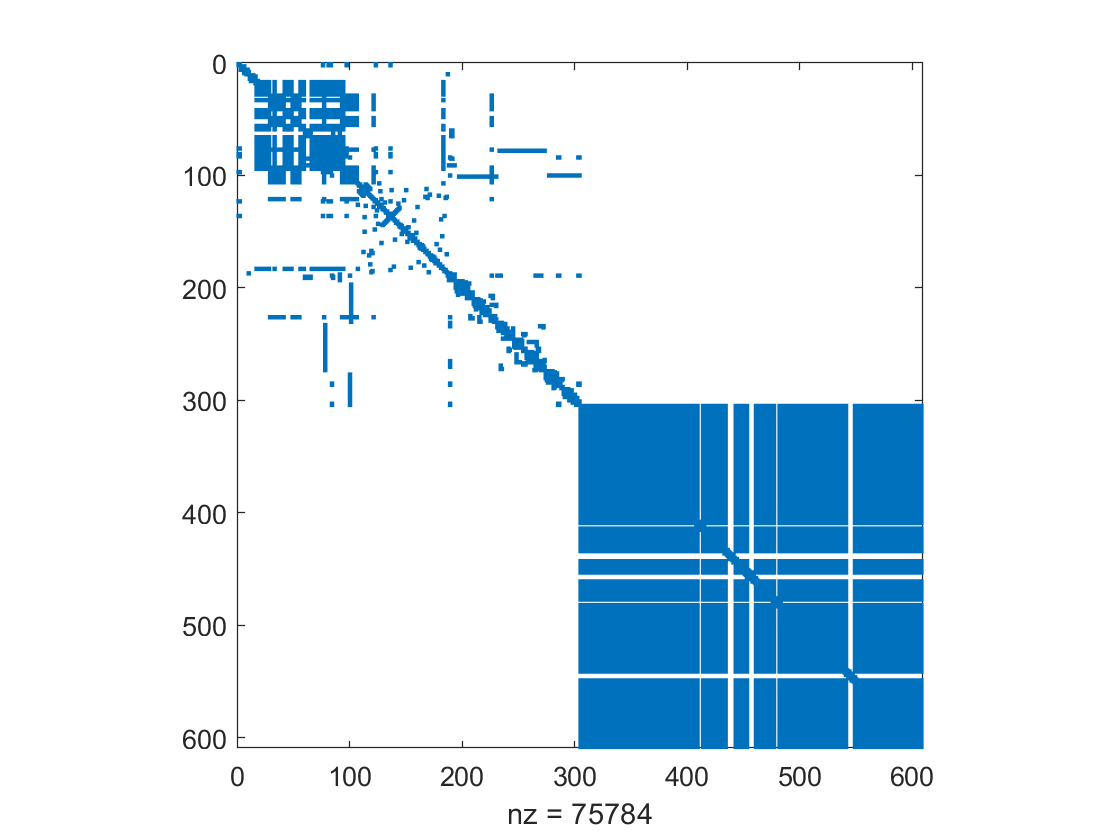}
      \caption{Nonzero of the Conductance Matrix without Deflation in ADC\_Net\_27}\label{Spy_G_hat_W}
    \end{figure}
    \begin{figure}[tb]
      \centering
      \includegraphics[width=\linewidth]{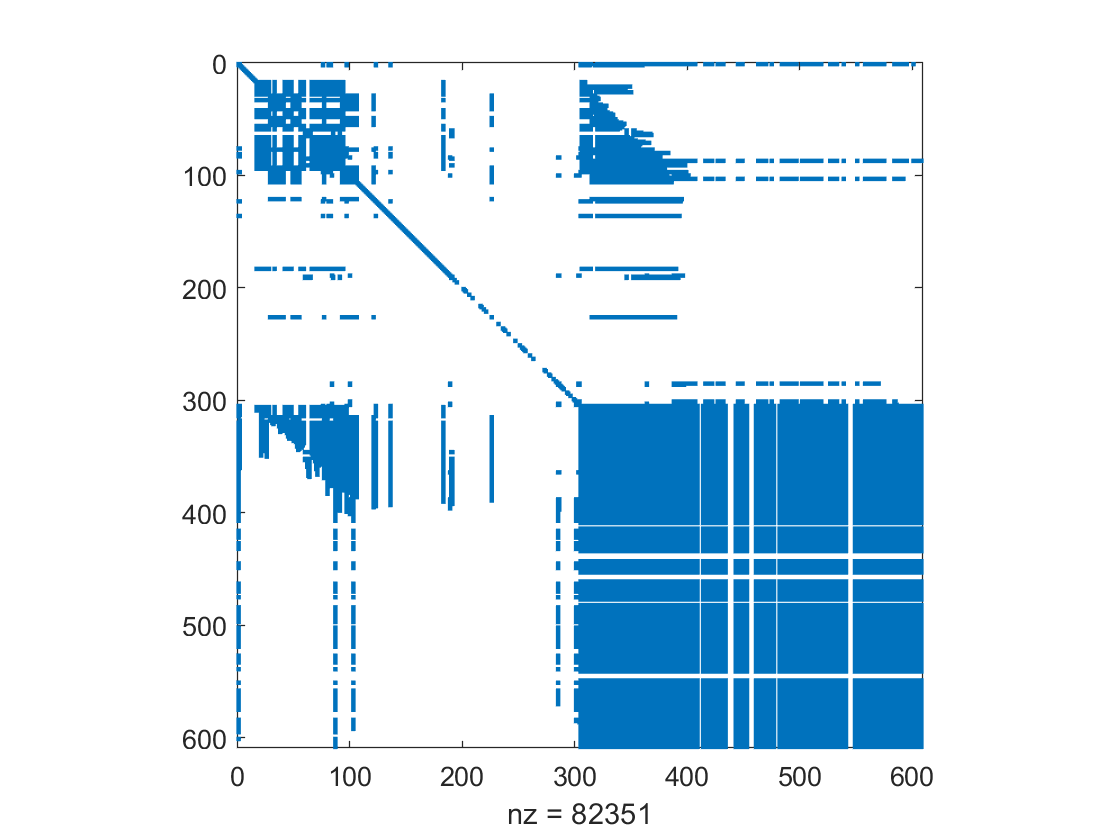}
      \caption{Nonzero of the Capacitance Matrix without Deflation in ADC\_Net\_27}\label{Spy_C_hat_W}
    \end{figure}
\begin{figure}[tb]
      \centering
      \includegraphics[width=\linewidth]{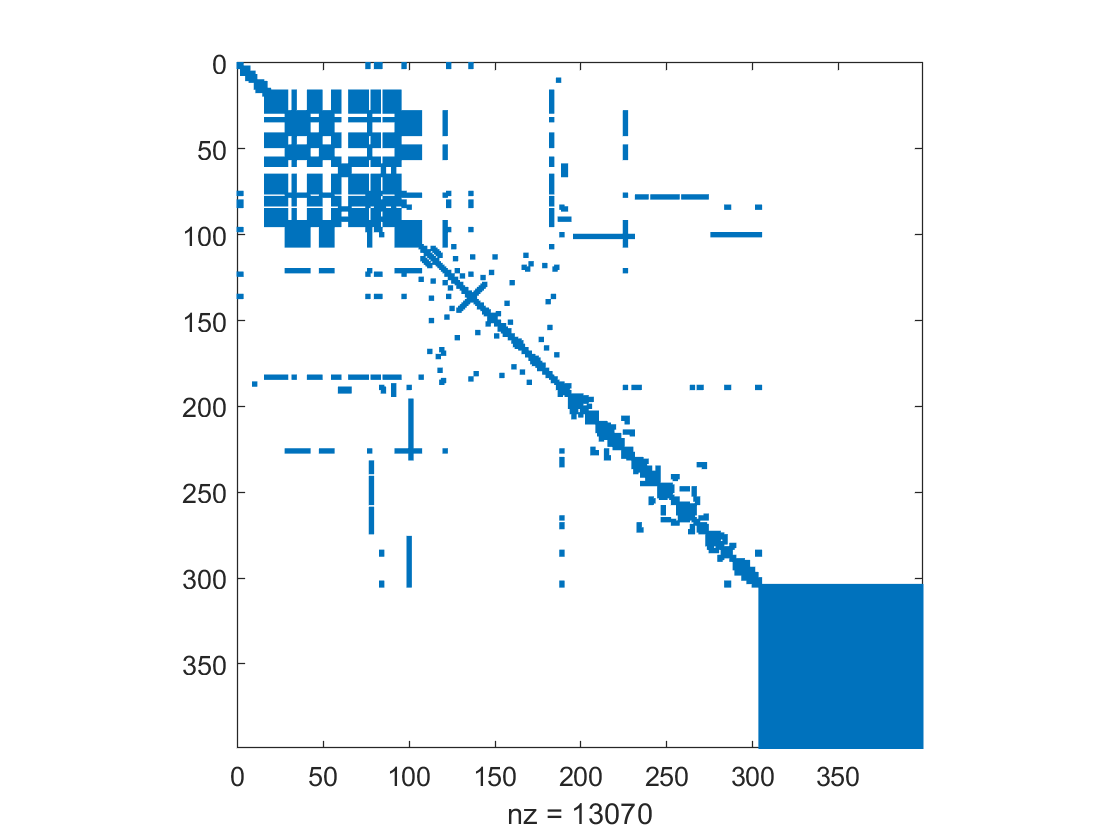}
      \caption{Nonzero of Conductance Matrix with Deflation in ADC\_Net\_27}\label{Spy_G_hat_D}
    \end{figure}
    \begin{figure}[tb]
      \centering
      \includegraphics[width=\linewidth]{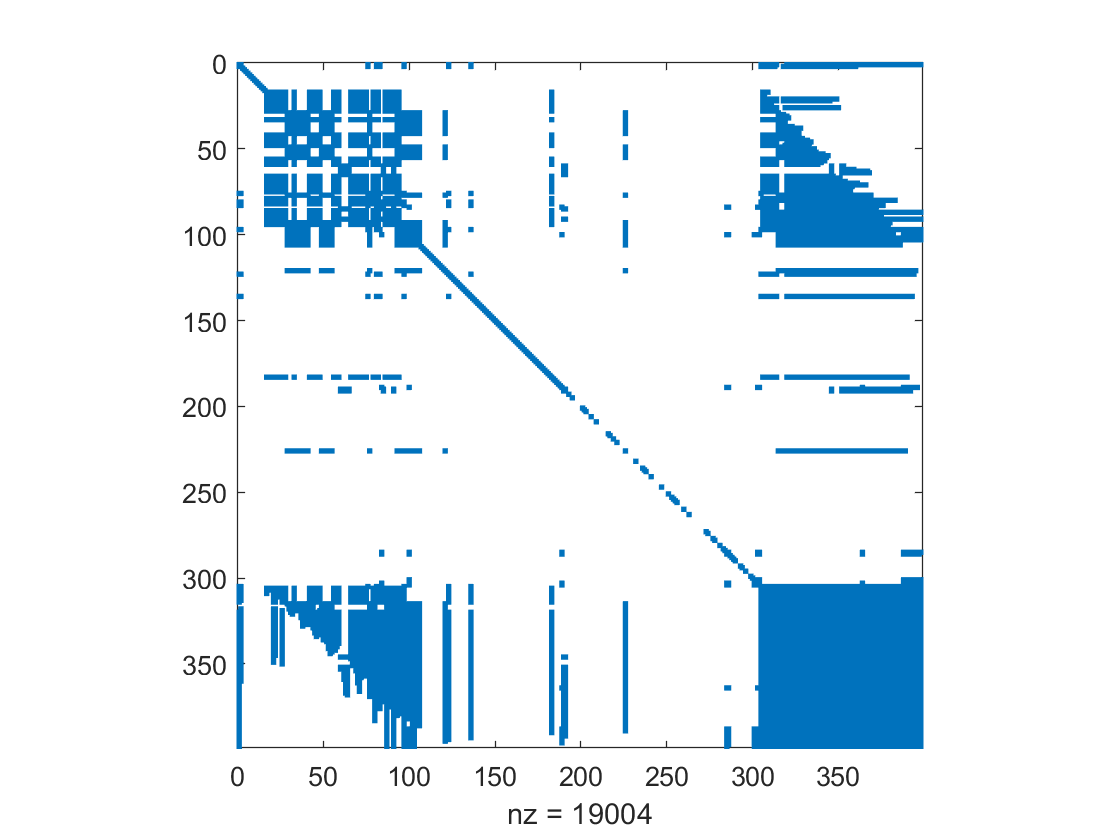}
      \caption{Nonzero of Capacitance Matrix with Deflation in ADC\_Net\_27}\label{Spy_C_hat_D}
    \end{figure}
By comparing the reduced-order matrices obtained under the two scenarios, we find that the matrix block $\widehat{C}_{I}^{(2)}$ is dense. After deflation is performed, the size of this block is significantly reduced, which in turn leads to a substantial decrease in the number of non-zero elements of the entire matrix.
% + 误差分析
\subsection{Implementation}
Based on multipoint reduction, we can obtain the frame by combining sparsity control technique and deflation in Algorithm 2.
     \begin{table}[h]
	\centering
	\begin{tabular*}{\hsize}{l}
		\hline
		\textbf{Algorithm 2 Sparse Multi-Point Matching Method}\\
		\hline
		\textbf{Input} $G$, $C$, $B$, $S$ formulated by \cref{GC,B,S}.\\
        1: $[\widehat{G}^{(1)}_{p},\widehat{C}^{(1)}_{p},W_1]$ = \textbf{Sparse SIP}$(G,C,p,s_1)$;\\
        2: Compute $\widehat{G}^{(1)}=W_1^\top G W_1, \widehat{C}^{(1)} = W_1^\top C W_1$\\
        3: $p_1$ is the size of $\widehat{G}^{(1)}_{p}$;\\
        4: \textbf{if} $m\geq 2$\\
        5:\quad \textbf{for} $i=2:m$\\
        6:\qquad Obtain $\widehat{C}^{(i-1)}_{C}$, $\widehat{C}^{(i-1)}_{I}$ and $\widehat{G}^{(i-1)}_{I}$ formulated in \cref{Gk,Ck};\\
        7:\qquad Apply \textbf{RRQR} to compute $C^{(i-1)}_{C} = Q_i\begin{bmatrix}
                                             B^{(i)} \\
                                             0 
                                           \end{bmatrix};$\\
        8:\qquad Obtain $G^{(i)}$ and $C^{(i)}$ by \cref{G_k+1,C_k+1};\\
        9:\qquad $p_i$ is the columns of $B^{(i)}$;\\ 
        10:\qquad $[\widehat{G}^{(i)}_{p},\widehat{C}^{(i)}_{p},W_i]$ = \textbf{SIP}$(G^{(i)},C^{(i)},p_i,s_i)$;\\
        11:\qquad Obtain $G^{(i)}$ and $C^{(i)}$:\\
        12:\qquad $\widehat{G}^{(i)} = W_i^\top G^{(i)} W_i$, $\widehat{C}^{(i)} = W_i^\top C^{(i)} W_i$;\\
        13:\quad \textbf{End for}\\
        14:\quad Construct $\widehat{G}$ and $\widehat{C}$ as \cref{Reduced_System};\\
        15:\quad $k=\sum_{i=1}^{m}p_i$;\\
        16:\quad $\widehat{B}=B(1:k,:)$;\\
        17: \textbf{Else}\\
        18:\quad $\widehat{G} = \widehat{G}^{(1)}_{p}$, $\widehat{C}= \widehat{C}^{(1)}_{p}$, $\widehat{B}=B(1:p_1,:)$;\\
        19: \textbf{End if}\\
        \textbf{Output} $\widehat{G}$, $\widehat{C}$, $\widehat{B}$.\\
		\hline
	\end{tabular*}
\end{table} 

In the algorithm, the implementation process of the \textbf{RRQR} algorithm is detailed in the algorithm  of section 3 in the article \cite{RRQR} and  that of the \textbf{SIP} algorithm is detailed in the algorithm 2 in section 4 in the article \cite{SIP}. As shown in \cref{Spy_G_hat_2_opt}, in the reduced-order system obtained by the optimized algorithm, $G^{(1)}_{p}$ and $C^{(1)}_{p}$ are larger sparse matrices, and other diagonal blocks are smaller dense matrix. Compare \cref{Spy_G_hat_2_opt} with \cref{Spy_G_hat_2}, the number of nonzero elements decreased by more than 70\%.
    \begin{figure}[tb]
      \centering
      \includegraphics[width=\linewidth]{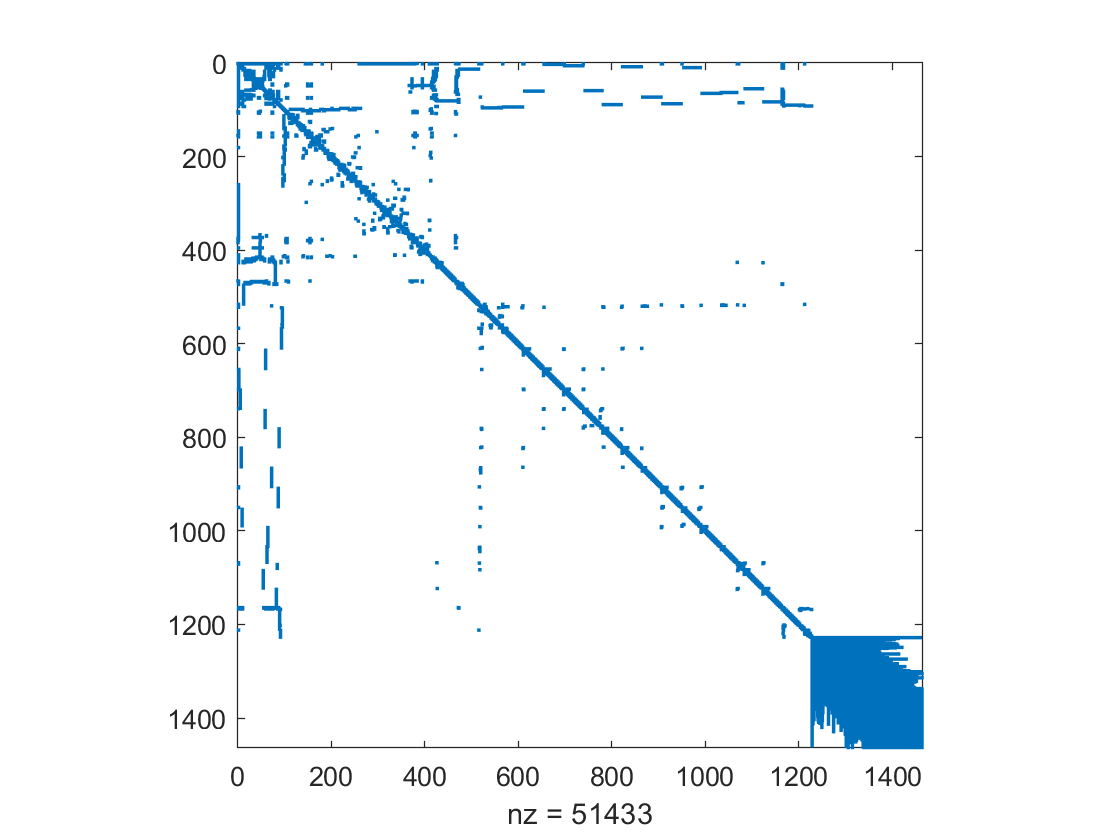}
      \caption{Nonzero Elements of the Optimized Reduced Conductance Matrix in ADC\_Net\_304}\label{Spy_G_hat_2_opt}
    \end{figure}
    \begin{figure}[tb]
      \centering
      \includegraphics[width=\linewidth]{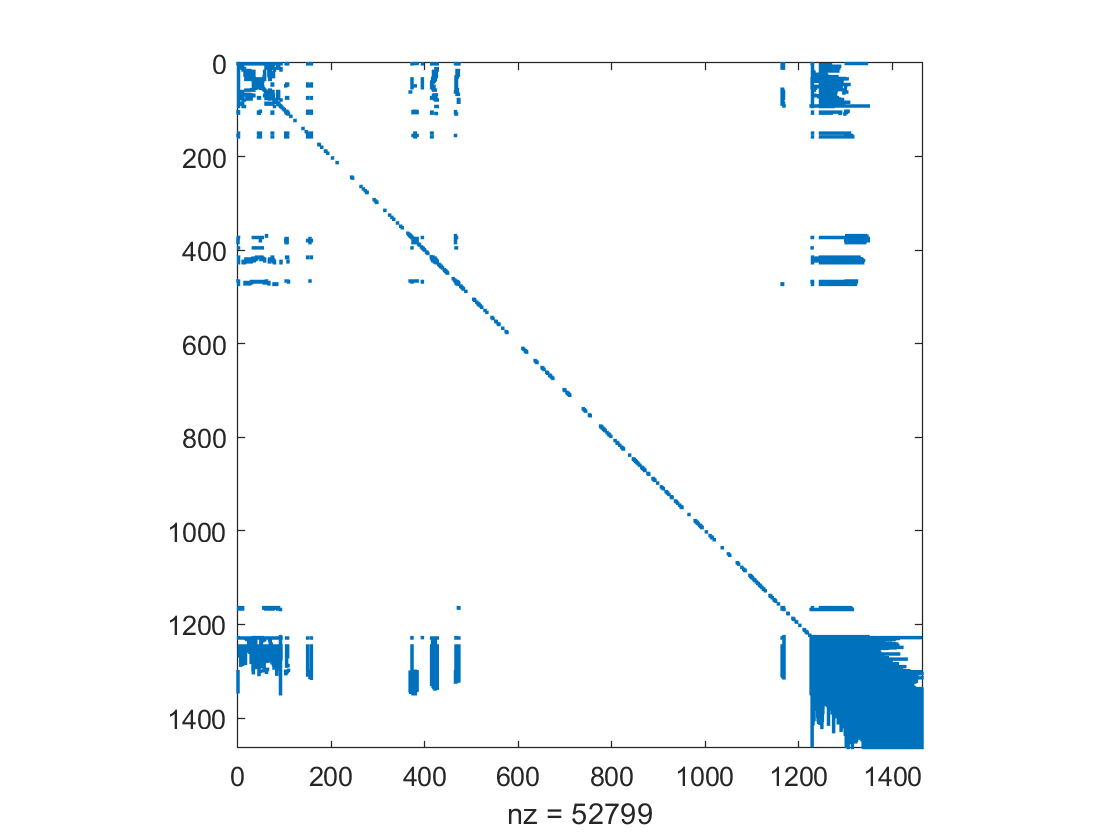}
      \caption{Nonzero Elements of the Optimized Reduced Capacitance Matrix in ADC\_Net\_304}\label{Spy_C_hat_2_opt}
    \end{figure}

% 4 pages
% \section{Structural Solver for the FPD System}
% \label{sec:solver}
% In this section, we propose an efficient structural solver for the FPD network. This structural solver is built on the general aggregation-based TGM PCG method, but fits well with the BMBT system. Then, we give the theoretical analysis of the convergence rate and complexity of this structural solver.

% % Unlike conventional solving methods, our proposed structural solver does not need the explicit system matrix $G$. The small matrices introduced in Section \ref{sec:incom} are enough for developing our PCG-based solver. Details are as follows.
% \input{content/solver.tex}

% 3 pages
\section{Numerical Results}
\label{sec:results}
In this section, we will verify the accuracy of our proposed method and the efficiency of deflation. Afterwards, we will compare the reduction efficiency with \textbf{SIP}, \textbf{PRIMA} and \textbf{TurboMOR}. Test cases are provided by our industrial partner, as shown in Table \ref{base}. The \#NNZ item is obtained from the explicit system matrix $G+C$. For each resolution circuit, VDD and VSS networks are in the same SPICE file. All experiments are carried out using a single CPU core of a computer cluster running 64-bit CentOS 7.0 with Intel Xeon Gold 6226R @2.90 GHz and 700GB Memory. The wall-clock runtime is reported. 

To reduce the fill-ins introduced by \textbf{SIP} method, we employ \textbf{AMD}\cite{AMD} ordering for the nodes and control the number of eliminated nodes to ensure the sparsity of the reduced system.
\begin{table}[tb]
	\centering
	\caption{INFORMATION OF Original Systems.}
    \begin{tabular}{c c c c c }
    \hline
    Case & Nodes &  Ports &  nnz($G+C$) \\
    \hline
    ADC\_Net\_27 & 3029 & 304 & 19663 \\
    \hline
    ADC\_Net\_304 & 4374 & 409 & 28808 \\
    \hline
    PLL\_Net\_1 & 9500 & 924 & 38790  \\
    \hline
    ADC\_Net\_64 & 11070 & 1082 & 71200 \\
    \hline
    PLL\_Net\_9 & 23132 & 360 & 106532 \\
    \hline 
    PLL\_Net\_301 & 1308 & 59 & 5848 \\
    \hline
    ADC\_Net\_1 & 2044 & 75 & 13400 \\
    \hline
	\end{tabular}
	\label{base}
\end{table} 

\subsection{Verify Accuracy} % Verify Moment Matching
In this subsection, we mainly compare the accuracy with \textbf{SIP}, and \textbf{TurboMOR-RC}. We define the relative errors on frequency $f$:
    \begin{equation}\label{E}
      E_R(f)\equiv\frac{\left\|H(f)-\widehat{H}(f)\right\|_2}{\left\|H(f)\right\|_2},
    \end{equation}
    \begin{equation}\label{E}
      E_C(f)\equiv\frac{\left\|H(2\pi \jmath f)-\widehat{H}(2\pi\jmath f)\right\|_2}{\left\|H(2\pi\jmath f)\right\|_2},
   \end{equation}
    where $f\in[0,10^{13}]$. $E_R$ can verify the multi-point moment matching property of our proposed method.  $E_C$ can simulate the relative error of node voltages in simulation to a certain extent. 

    Firstly, we compare the accuracy of the reduced systems generated by \textbf{SIP} and our proposed method. The frequency points are 
    \begin{equation*}
        s = [0,1\times 10^9,1\times 10^{12}].
    \end{equation*}
    The relative errors $E_R$ at different frequency points are illustrated in \cref{Error_SP_1,Error_SP_2}. The relative errors $E_C$ at different frequency points are illustrated in \cref{Error_SP_3,Error_SP_4}. As can be seen from the figures, for the two cases with a few number of ports, \textbf{SIP} is insufficient to meet the accuracy requirements in the high-frequency band.
    \begin{figure}[tb]
      \centering
      \includegraphics[width=\linewidth]{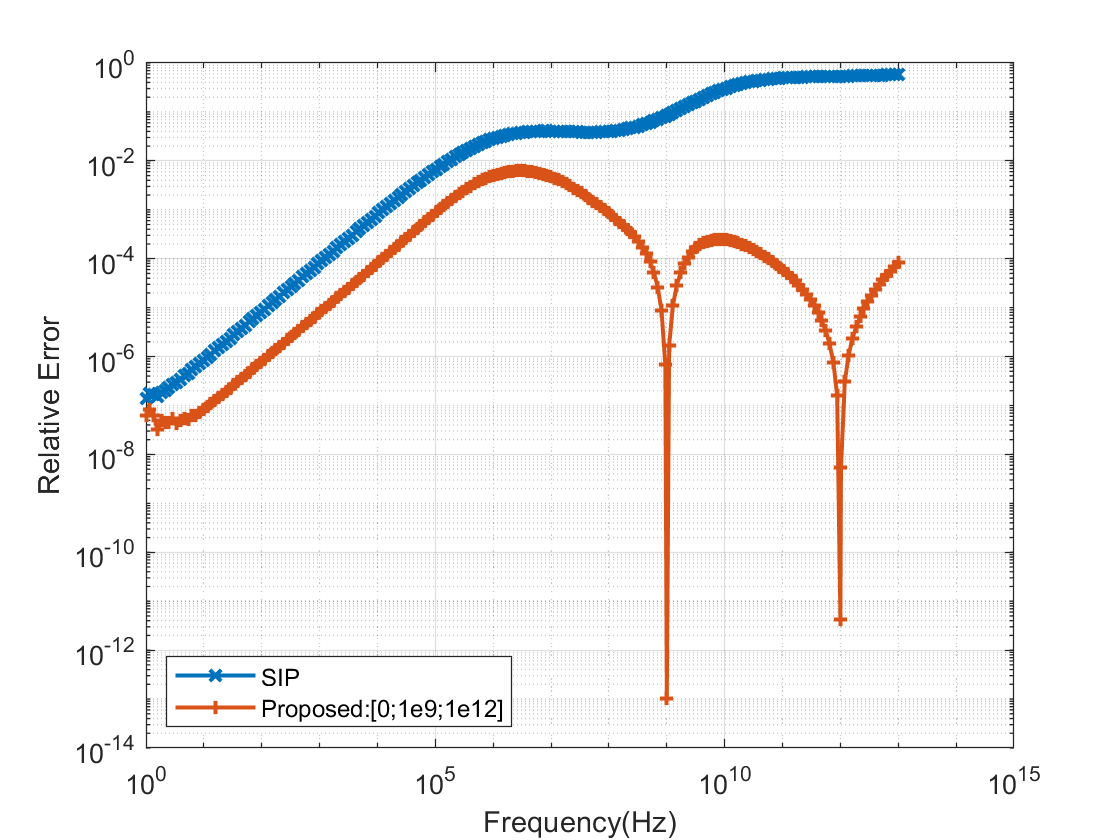}
      \caption{$E_R$ Comparison with \textbf{SIP} at different frequency in PLL\_Net\_301}
      \label{Error_SP_1}
    \end{figure}
     \begin{figure}[tb]
      \centering
      \includegraphics[width=\linewidth]{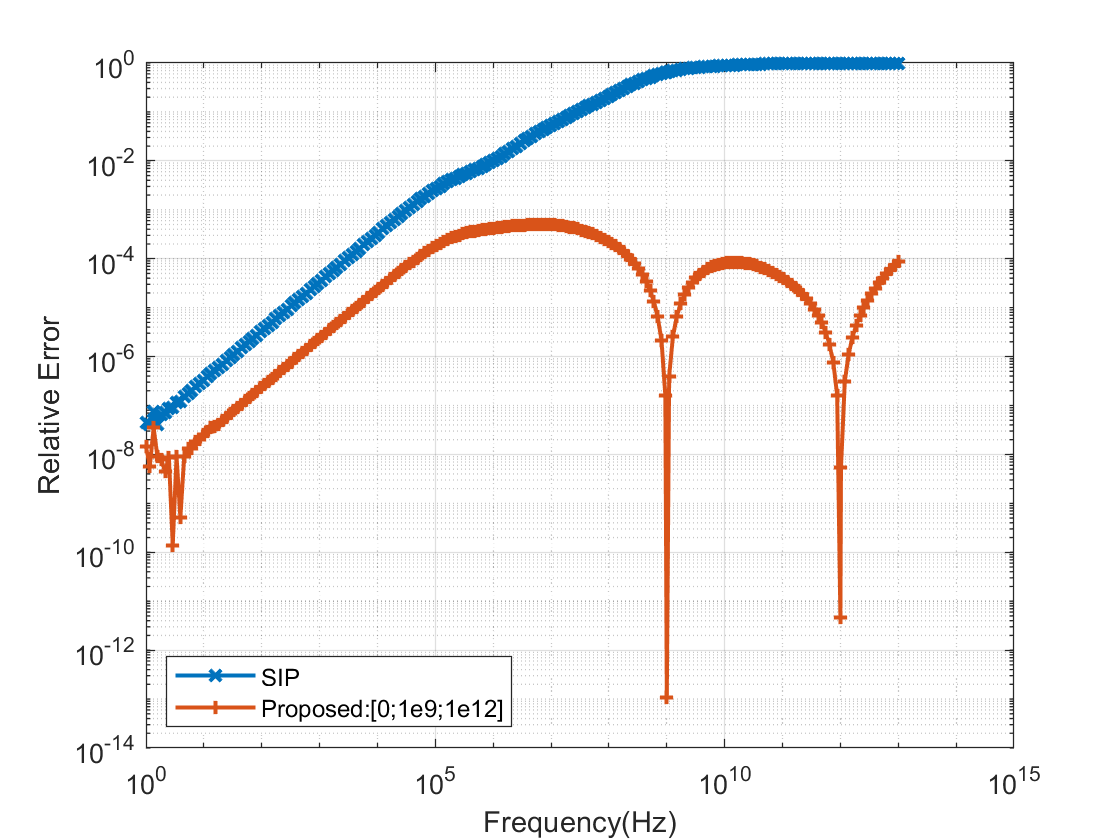}
      \caption{$E_R$ Comparison with \textbf{SIP} at different frequency in ADC\_Net\_1}
      \label{Error_SP_2}
    \end{figure}
    \begin{figure}[tb]
      \centering
      \includegraphics[width=\linewidth]{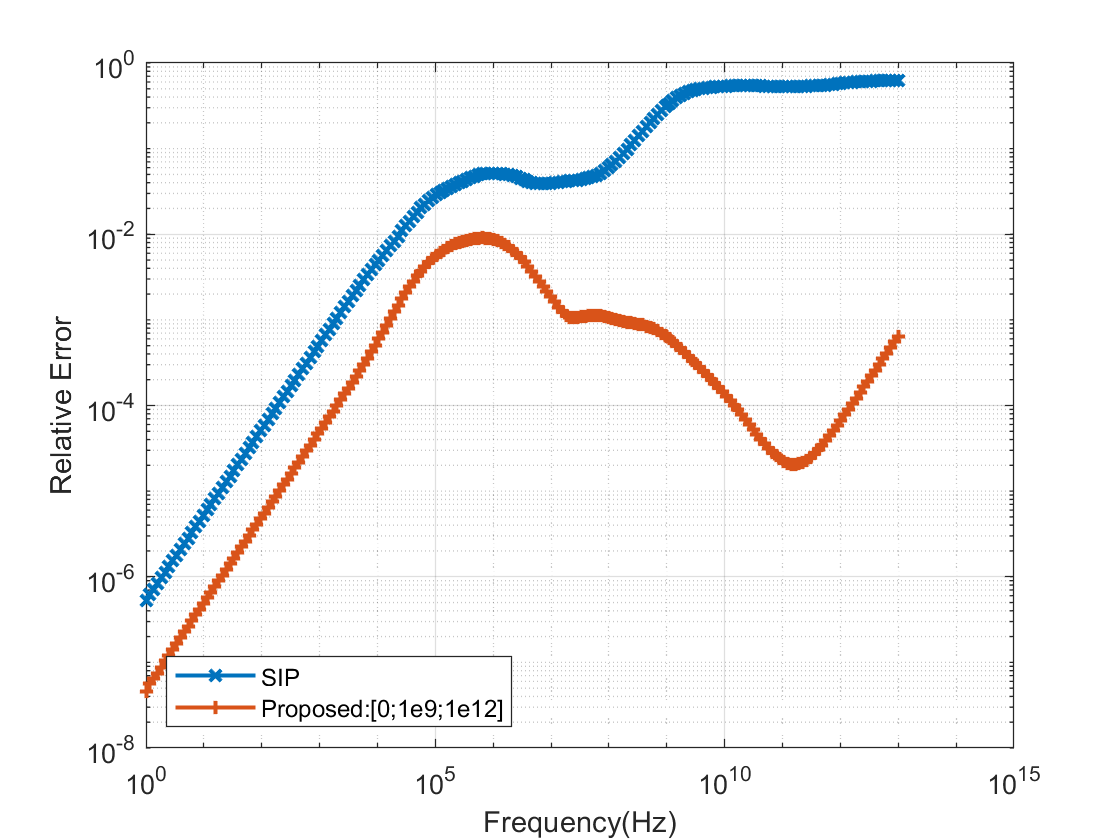}
      \caption{$E_C$ Comparison with \textbf{SIP} at different frequency in PLL\_Net\_301}
      \label{Error_SP_3}
    \end{figure}
     \begin{figure}[tb]
      \centering
      \includegraphics[width=\linewidth]{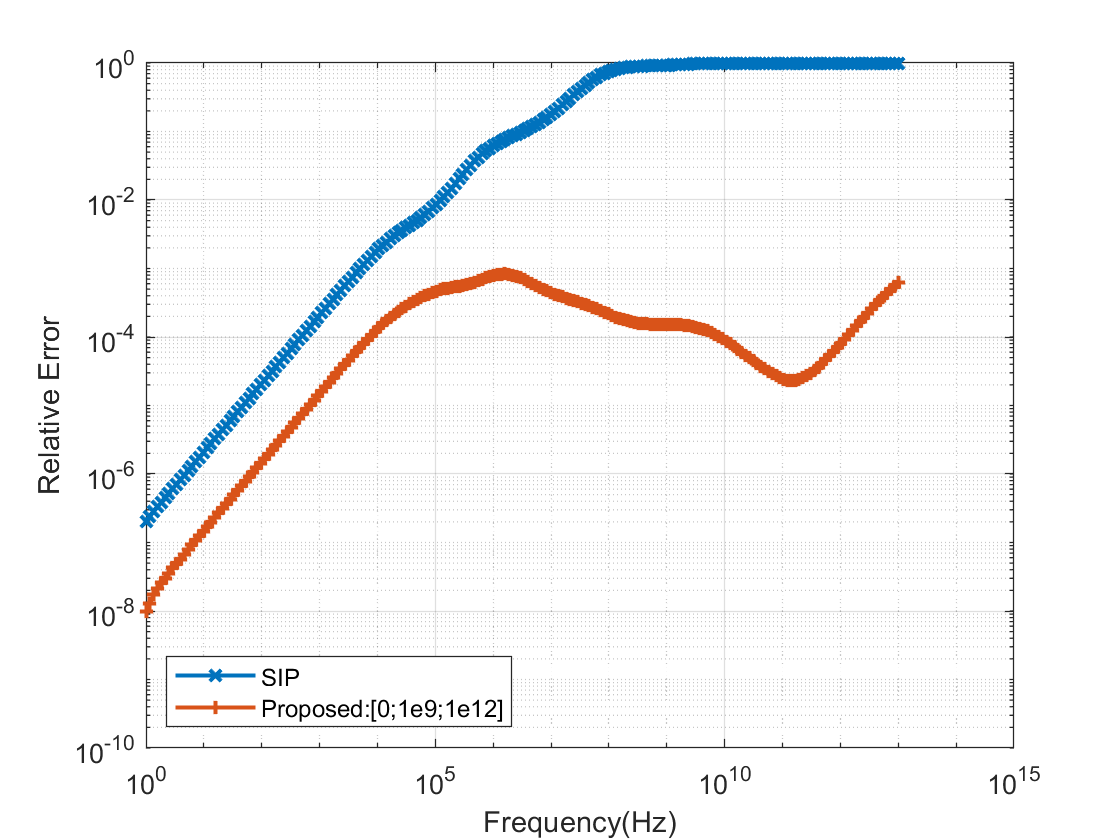}
      \caption{$E_C$ Comparison with \textbf{SIP} at different frequency in ADC\_Net\_1}
      \label{Error_SP_4}
    \end{figure}
    
    When the order of moment matching in \textbf{TurboMOR-RC} takes 8 and the frequency points are 
    \begin{equation*}
        s = [0,0,1\times 10^9,1\times 10^{12}].
    \end{equation*} The relative error at different frequency points are illustrated in \cref{Error_TP_1,Error_TP_2}. The relative errors $E_C$ at different frequency points are illustrated in \cref{Error_TP_3,Error_TP_4}. It can be seen from this that both methods achieve high accuracy in the lower frequency range. However, when the frequency reaches a certain level, our method exhibits higher accuracy compared to \textbf{TurboMOR}.
    \begin{figure}[tb]
      \centering
      \includegraphics[width=\linewidth]{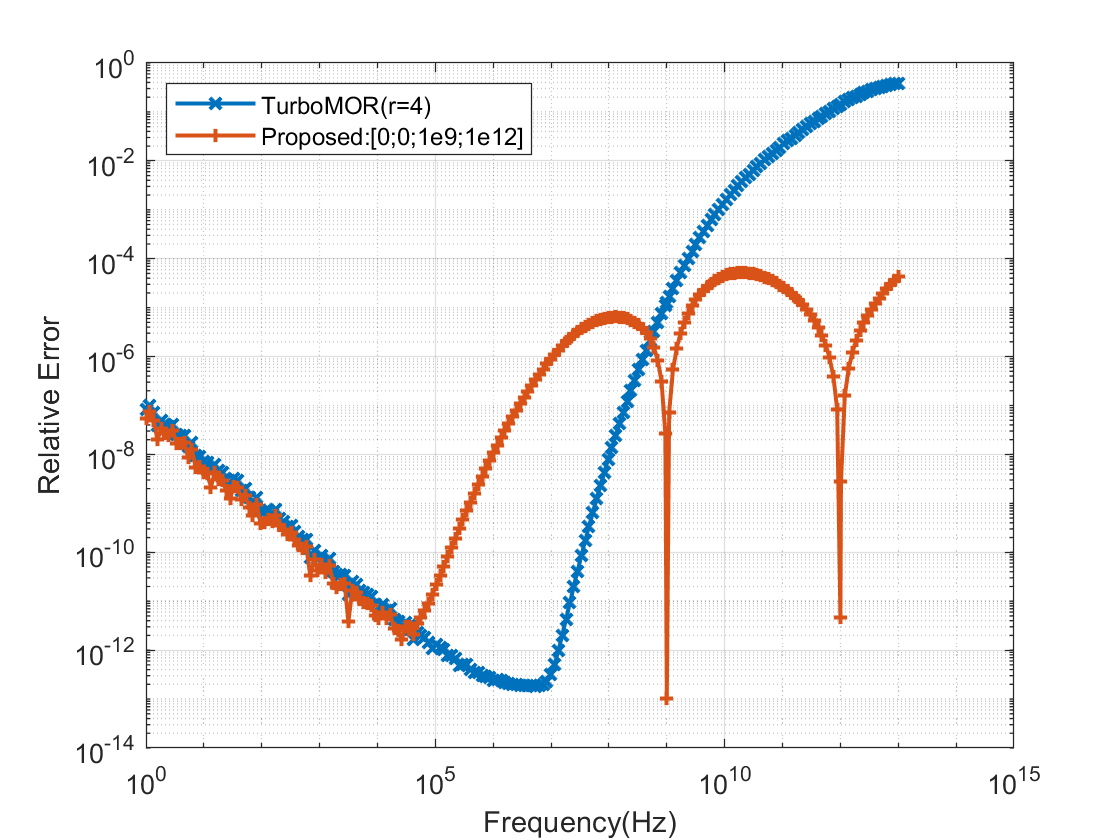}
      \caption{$E_R$ Comparison with \textbf{TruboMOR} at different frequency points in PLL\_Net\_301(r=4)}
      \label{Error_TP_1}
    \end{figure}
    \begin{figure}[tb]
      \centering
      \includegraphics[width=\linewidth]{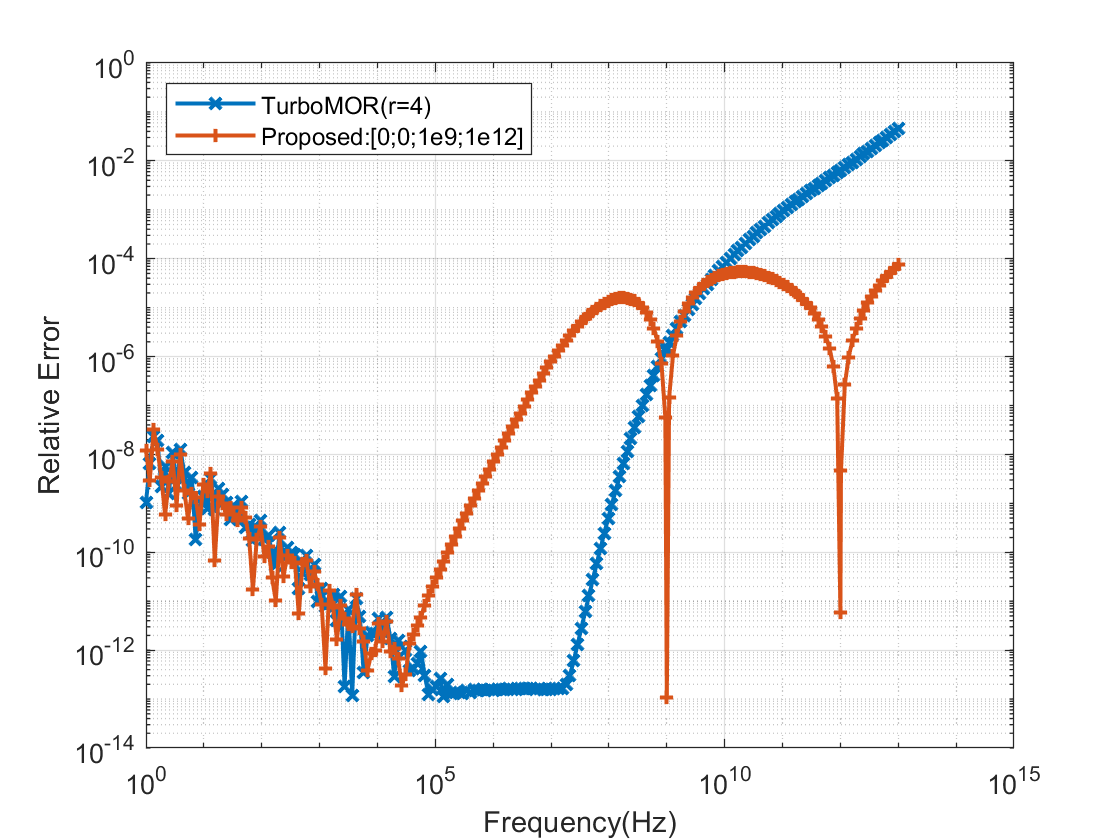}
      \caption{$E_R$ Comparison with \textbf{TruboMOR} at different frequency points in ADC\_Net\_1(r=4)}
      \label{Error_TP_2}
    \end{figure}
    \begin{figure}[tb]
      \centering
      \includegraphics[width=\linewidth]{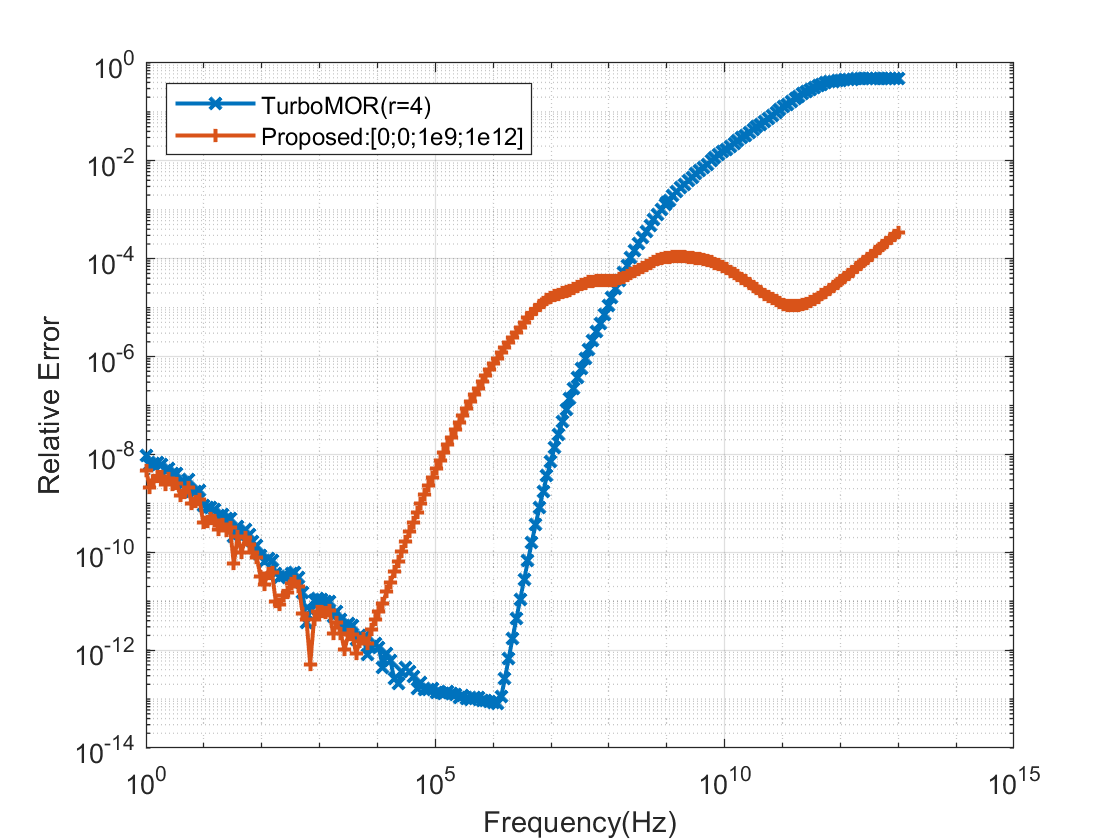}
      \caption{$E_C$ Comparison with \textbf{TruboMOR} at different frequency points in PLL\_Net\_301(r=4)}
      \label{Error_TP_3}
    \end{figure}
    \begin{figure}[tb]
      \centering
      \includegraphics[width=\linewidth]{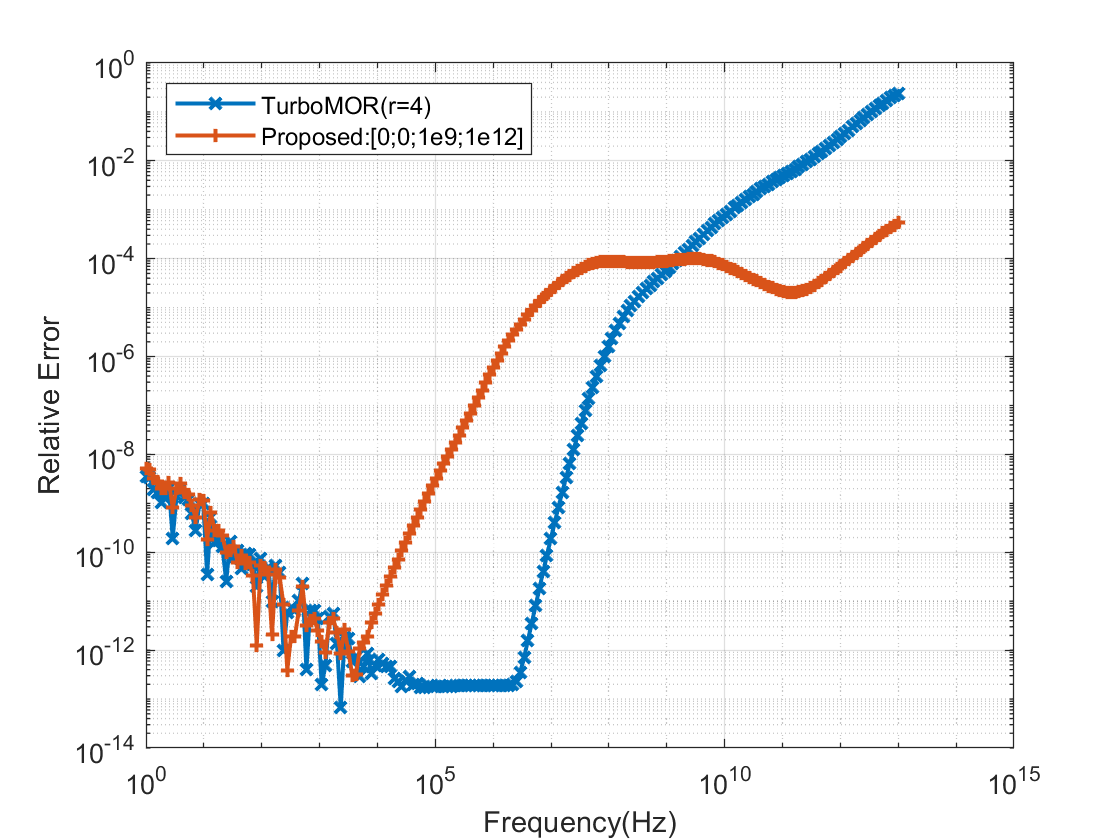}
      \caption{$E_C$ Comparison with \textbf{TruboMOR} at different frequency points in ADC\_Net\_1(r=4)}
      \label{Error_TP_4}
    \end{figure}

    In addition, \cref{Error_SP_1,Error_SP_2,Error_TP_1,Error_TP_2} reveals that the moment matching properties at $1\times 10^9$ and $1\times 10^{12}$ are satisfied.

\subsection{Verify Deflation} % Verify Deflation Error
To verify the significance of the deflation technique in multi-port cases (the first 5 cases in \cref{base}), this subsection compares the number of non-zero elements and accuracy of the reduced-order systems obtained with and without deflation during decomposing matrices $C_{C}^{(k)}$. The frequency points are 
    \begin{equation*}
        s = [0,1\times 10^6].
    \end{equation*} The results are shown in \cref{Deflation}. The "Reduction Ratio" column indicates the percentage reduction in the number of non-zero elements deflation is  applied. As can be seen from the table, for these test cases, applying deflation significantly reduces the number of non-zero elements in the reduced-order matrix, thereby improving the efficiency of model order reduction. Meanwhile, there is no substantial loss in the accuracy of the reduced-order system.

    However, for a specific case (PLL\_Net\_9), if sparsity control is not implemented, the number of non-zero elements remains unacceptable. In such case we need to apply sparsity control to ensure that our reduced systems are sparse. Then the results of \textbf{RC} reduction for PLL\_Net\_9 are shown in\cref{Deflation_5}. We can observe that when both techniques are used simultaneously, the number of nonzero elements is significantly reduced. In  next subsection, we will compare our implemented method with our \textbf{RCR} methods.
\begin{table}[h]
	\centering
	\caption{Efficiency of Implemntation technique in PLL\_Net\_9.}
    \begin{tabular}{c m{1.5cm}<{\centering} m{1cm}<{\centering}  m{1cm}<{\centering} m{1.5cm}<{\centering} m{1cm}<{\centering} }
    \hline
    Ports &  Whether Deflation & Whether Sparsity Control & Nodes &  nnz($\widehat{G}+\widehat{C}$) & Reduction Ratio \\
    \hline
      \multirow{3}{*}{360} & No & No & 720 & 386,292 & ———— \\ 
                       &   Yes  & No  & 713 & 381,271 & 1.30\%  \\
                       &   Yes  & Yes  & 12830 & 135830 & 64.84\%  \\
    \hline
	\end{tabular}
	\label{Deflation_5}
\end{table}  
\begin{table*}[h]
	\centering
	\caption{Deflation Efficiency of Different Cases.}
    \begin{tabular}{c m{1cm}<{\centering} m{1.5cm}<{\centering}  m{1cm}<{\centering} m{1.5cm}<{\centering}  m{1.5cm}<{\centering} m{1.5cm}<{\centering} m{1.7cm}<{\centering} }
    \hline
    Case Name & Ports &  Whether Deflation & Num. Nodes &  nnz($\widehat{G}+\widehat{C}$) & Reduction Ratio  & $E_C(1e6)$ & $E_C(1e12)$\\
    \hline  
    \multirow{2}{*}{ADC\_Net\_27} & \multirow{2}{*}{304} & No & 608 & 83,522 & ———— & 1.2053e-9 & 4.3066e-8 \\
                       &                     & Yes  & 398  & 19,914 & 76.16\%  & 1.2161e-9 & 4.4755e-8 \\
    \hline
    \multirow{2}{*}{ADC\_Net\_304} & \multirow{2}{*}{409} & No & 818 & 102,660 &  ———— & 1.3710e-10 & 1.7824e-9 \\
                       &                     & Yes  & 526 & 37,058  & 63.90\%  & 1.4308e-10 & 1.8241e-9 \\
    \hline
    \multirow{2}{*}{PLL\_Net\_1} & \multirow{2}{*}{1140} & No & 1,848 & 892,946  & ————  & 2.7951e-9 & 1.0615e-8 \\
                       &                     & Yes  & 1,140 & 242,540 & 72.84\%  & 2.8844e-9 & 5.9910e-8 \\
    \hline
    \multirow{2}{*}{ADC\_Net\_64} & \multirow{2}{*}{1082} & No & 2,164 & 1,035,476  & ————  & 5.6848e-9 & 1.0615e-5 \\
                       &                     & Yes  & 1,486 & 281,900 & 72.87\%  & 5.5547e-9 & 1.1015e-5 \\
    \hline
    \multirow{2}{*}{PLL\_Net\_9} & \multirow{2}{*}{360} & No & 720 & 386,292 & ————  & 3.4303e-8 & 7.0083e-5 \\
                       &                     & Yes  & 713 & 381,271 & 1.30\%  & 3.4303e-8 & 7.0238e-5\\
    % \hline
    % \multirow{2}{*}{PLL\_Net\_301} & \multirow{2}{*}{59} & Yes & 87 & 5875 & 44.05\%  & 3.0317e-8 & 3.3459e-2 \\
    %                    &                     & No  & 118 & 10502 & ————  & 1.3880e-8 & 2.1221e-2 \\
    % \hline
    % \multirow{2}{*}{ADC\_Net\_1} & \multirow{2}{*}{75} & Yes & 132 & 12463 & 25.15\%  & 7.6678e-8 & 7.0045e-2 \\
    %                    &                     & No  & 150 & 16651 & ————  & 7.6655e-8 &  3.2245e-2\\
    \hline
	\end{tabular}
	\label{Deflation}
\end{table*} 

\subsection{Compare Efficiency} % Compare speedup
In this subsection, we mainly test the model order reduction efficiency of different algorithms for multi-port RC circuits. We need to compare both the accuracy at different frequencies and the simulation time of the reduced systems obtained by using the three algorithms. In this case, the simulation time refers to the time required to compute the transfer function. Computing the transfer function involves performing calculations such as matrix multiplications and solving linear equations. When computing $\left(G+s C\right)^{-1}B,$ We utilize the \textbf{KLU}\cite{KLU} library to perform \textbf{LU} factorization and then employ forward substitution and back substitution to solve the $p$ linear systems. The order of moment matching in \textbf{PRIMA} and \textbf{TurboMOR-RC} takes 4 and the frequency points are 
    \begin{equation*}
        s = [0,1\times 10^6].
    \end{equation*}
The results are revealed in \cref{efficiency}. The simulation time refers to the time taken to compute the transfer functions at 100 frequency points. The "Speed up" column represents the improvement in simulation speed of the reduced-order systems obtained by other algorithms relative to those obtained by the \textbf{PRIMA} algorithm.

As can be seen from the table, for multi-port \textbf{RC} circuits, the reduced-order systems obtained by \textbf{TurboMOR} and \textbf{PRIMA} contain an excessive number of non-zero elements, which leads to increased simulation time. In contrast, our method not only achieves model order reduction but also maintains an accuracy level comparable to that of these two methods. 

Compared with \textbf{SIP}, without introducing a significant number of additional linear components (which would undermine the efficiency gains from order reduction), our method can better retain the high-frequency information of the original system, thereby achieving a substantial improvement in accuracy at high-frequency points.

\begin{table*}[tb]
	\centering
	\caption{INFORMATION OF REDUCED SYSTEMS OBTAINED BY DIFFERENT METHODS.}
    \begin{tabular}{c m{1cm}<{\centering} m{1.5cm}<{\centering}  m{1cm}<{\centering} m{1.5cm}<{\centering}  m{1.5cm}<{\centering} m{1.5cm}<{\centering} m{1.7cm}<{\centering} m{1.7cm}<{\centering}}
    \hline
    Case Name & Original Time(s) &  Method & Num. Nodes &  nnz($G+C$) & Simulation Time(s) &Speed up & $E_C(1e9)$ & $E_C(1e12)$\\
    \hline  
    \multirow{4}{*}{ADC\_Net\_27} & \multirow{4}{*}{0.96} & PRIMA & 606 & 369,664 & 15.04  & 1$\times$ & 6.3219e-12 & 1.6467e-8 \\
                       &                     & TurboMOR  & 608 & 31,036 & 0.84  & 17.90$\times$ & 6.3099e-12 & 1.6316e-8 \\
                       &                     & SIP & 304 & 4,234 & 0.20 & 75.20$\times$ & 2.4359e-6 & 8.6037e-3\\
                       &                     & Proposed & 398 & 10,364 & 0.32 & 47.00$\times$ & 6.3120e-12 & 1.3151e-8\\
    \hline    
    \multirow{4}{*}{ADC\_Net\_304} & \multirow{4}{*}{1.80} & PRIMA & 806 & 649,636 & 32.28  & 1$\times$ & 3.2083e-11 & 1.1742e-9 \\
                       &                     & TurboMOR  & 818 & 129,852 & 6.51  & 4.97$\times$ & 3.9964e-11 & 1.3767e-9 \\
                       &                     & SIP & 2,045 & 14,019  & 0.94 &50.59$\times$ & 1.9480e-6 & 2.0618e-3\\
                       &                     & Proposed & 2,143 & 27,255 & 1.28 & 25.29$\times$ & 2.6984e-11 & 3.1362e-10\\
    \hline 
    \multirow{4}{*}{PLL\_Net\_1} & \multirow{4}{*}{9.22} & PRIMA & 1,358 & 1844,164  & 199.19  & 1$\times$ & 3.8136e-10 & 5.6474e-8 \\
                       &                     & TurboMOR  & 1,848 & 309,387 & 22.51  & 8.84$\times$ & 3.9934e-10 & 4.7658e-8 \\
                       &                     & SIP & 3,696 & 55,584 & 5.80 & 34.34$\times$ & 6.4044e-6 & 1.1701e-3\\
                       &                     & Proposed & 3,915 & 94,395 & 7.08 & 28.13$\times$ & 3.3859e-10 & 3.3253e-11\\
    \hline
    \multirow{4}{*}{ADC\_Net\_64} & \multirow{4}{*}{12.15} & PRIMA & 2,158 & 4,656,964 & 710.23  & 1$\times$ & 2.0020e-11 & 1.0234e-5  \\
                       &                     & TurboMOR  & 2,164 & 161,931 & 14.14  & 50.22$\times$ & 2.0474e-11 & 1.0234-5 \\
                       &                     & SIP & 3,246 & 27,936  & 4.11 & 172.80$\times$ & 2.1896e-5 & 1.3537e-2\\
                       &                     & Proposed & 3601 & 48,639  & 5.73 & 123.95$\times$ & 2.0005e-11 & 1.0229e-5\\
    \hline  
    \multirow{4}{*}{PLL\_Net\_9} & \multirow{4}{*}{9.95} & PRIMA & 714 & 509,796 & 33.06  & 1$\times$ & 4.0187e-10 & 4.2888e-5  \\
                       &                     & TurboMOR  & 720 & 384,770 & 14.14  & 50.22$\times$ & 1.0103e-10 & 4.2888-5 \\
                       &                     & SIP & 12,600 & 77,084  & 5.61 & 172.80$\times$ & 6.2099e-5 & 7.7748e-3\\
                       &                     & Proposed & 12,830 & 135,830  & 7.74 & 123.95$\times$ & 5.4306e-11 & 3.7866e-5\\
    \hline  
	\end{tabular}

	\label{efficiency}
\end{table*} 

\section{Conclusion}
\label{sec:conclu}
To conclude, we propose a multipoint moment matching framework and introduce sparsity control and deflation techniques to guarantee sparsity. Compared to \textbf{SIP}, our method significantly improves the accuracy of the reduced-order system at high-frequency points without introducing a large number of additional linear components. Compared to other high-order moment matching methods, our approach significantly reduces the number of non-zero elements in the reduced system, thereby greatly improving simulation speed. This makes our method particularly valuable in applications where high-frequency precision is a key requirement.
% Power grid simulation is a long-term problem. The need for highly efficient simulators can never be satisfied. We present a structural simulator for power grid analysis of flat panel display, which is proved accurate, fast and memory-saving on real industrial cases. The ideas of mixed storage format, incomplete stamping, and structural solver are promising for solving other highly periodic circuits, e.g., SRAM. In the future, we will further optimize the performance of FPDsim and investigate the possibility of speeding up FPDsim based on high-performance computing architectures.

%\section*{Acknowledgments}
% The authors would like to thank Prof. Yang for providing the AMGPCG solver\cite{powerrush}, Prof. Yu \cite{fegrass} for
% providing the feGRASS software and Prof. Feng \cite{grass} for
% providing the GRASS software. 

\appendices
\section{Details of the Multi-Frequency Points Reduction Process}
\label{app:process}
In this section, we will present the detailed iterative process of multi-frequency-point reduction, where the number of iterations $k$ is greater than 1. In the $k$-th iteration, define matrix $A^{(k)}$:
     \begin{equation*}
      A^{(k)}\equiv G^{(k)}+s_k C^{(k)}=\begin{bmatrix}
          A^{(k)}_{p} & A^{(k)\top}_{C} \\
          A^{(k)}_{C} & A^{(k)}_{I} 
        \end{bmatrix},\vspace{1em} k\geq 2.
    \end{equation*}
    We can construct the congruence transformation $W_k$ to decouple the linear ports and internal nodes in system $\Sigma_{k-1}^{(2)}$:
    \begin{equation*}
      W_k =  \begin{bmatrix}
                 I_{p_k} &  \\
                 -\left(A_{I}^{(k)}\right)^{-1}A^{(k)}_{C} & I 
               \end{bmatrix};
    \end{equation*}
    \begin{equation}\label{Gk}
       \widehat{G}^{(k)} = W_k^\top G^{(k)} W_k= \begin{bmatrix}
           \widehat{G}^{(k)}_{p} &  \widehat{G}^{(k)\top}_{C} \\
           \widehat{G}^{(k)}_{C} &   \widehat{G}^{(k)}_{I} 
        \end{bmatrix};
     \end{equation}   
     \begin{equation}\label{Ck}
       \widehat{C}^{(k)} = W_k^\top C^{(k)} W_k= \begin{bmatrix}
          \widehat{C}^{(k)}_{p} & \widehat{C}^{(k)\top}_{C} \\
          \widehat{C}^{(k)}_{C} & \widehat{C}^{(k)}_{I} 
        \end{bmatrix}.
     \end{equation}
     Then we decompose matrix $\widehat{C}^{(k)}_{C}$:
     \begin{equation*}
       \widehat{C}^{(k)}_{C} = Q^{(k+1)}\begin{bmatrix}
                                     B^{(k+1)} \\
                                     0 
                                   \end{bmatrix},
     \end{equation*}
     where $B^{(k+1)}\in\mathbb{R}^{p_k\times p_{k+1}},\vspace{1em} p_{k+1}\leq p_k$, and construct orthogonal transformation $Q_{k+1}$:
     \begin{equation*}
       Q_{k+1} = \begin{bmatrix}
                   I_{p_k} &  \\
                    &  Q^{(k+1)}
                 \end{bmatrix}.
     \end{equation*}
     Then, through orthogonal transformation, we can obtain the conductance matrix and capacitance matrix of the $(k+1)$-th iteration:
     \begin{equation}\label{G_k+1}
         G^{(k+1)} \equiv Q^{(k+1)\top} \widehat{G}_{22}^{(k)}Q^{(k+1)};
     \end{equation}
     \begin{equation}\label{C_k+1}
         C^{(k+1)} \equiv Q^{(k+1)\top} \widehat{C}_{22}^{(k)}Q^{(k+1)}.
     \end{equation}
     After $m-1$ eliminations, the original system can be seen as the cascade of $m$ systems:
     \begin{align*}
        & \Sigma_1^{(1)}: \left\{\begin{aligned}
            & \left(\widehat{G}_{p}^{(1)} + s \widehat{C}_{p}^{(1)}\right) x^{(1)}_p = B^{(1)} u(s) + u^{(1)}(s), \\
            & y = B^{(1)\top} x^{(1)}_p, 
        \end{aligned}\right. \\
        & \Sigma_1^{(2)}: \left\{\begin{aligned}
            & \left(\widehat{G}_{p}^{(2)} + s \widehat{C}_{p}^{(2)}\right) x_p^{(2)} = -B^{(2)}u_2(s) + u^{(2)}(s), \\
            & y^{(2)} = -B^{(2)\top} x_p^{(2)}, 
        \end{aligned}\right. \\
        & \quad \quad \vdots \nonumber \\ 
        & \Sigma_1^{(m-1)}: \left\{\begin{aligned}
            & \left(\widehat{G}_{p}^{(m-1)} + s \widehat{C}_{p}^{(m-1)}\right) x_p^{(m-1)} =\\
            &-B^{(m-1)}u_{m-1}(s) + u^{(m-1)}(s), \\
            & y^{(m-1)} = -B^{(m-1)\top} x_p^{(m-1)}, 
        \end{aligned}\right. \\
        & \Sigma_{m-1}^{(2)}:\left\{\begin{aligned} &G^{(m)}x^{(m-1)}_I(s) + sC^{(m)}x^{(m-1)}_I(s) = -B_mu_m(s),\\
      &y(s) = -B_m^{\top} x^{(m-1)}_I(s),
      \end{aligned}\right. 
    \end{align*}
     where 
    \begin{equation*}
        u_k(s) = (s-s_{k-1})x_p^{(k-1)},\vspace{1em}2\leq k\leq m;
    \end{equation*}
    \begin{equation*}
        u^{(k)}(s) = (s-s_k)y^{(k+1)}(s),\vspace{1em}2\leq k\leq m-1.
    \end{equation*}
    
    Finally, we decouple the linear ports and the internal nodes in $\Sigma_{m-1}^{(2)}$ at expansion point $s_m$. Define matrix $A^{(m)}$:
    \begin{equation*}
        A^{(m)}\equiv G^{(m)} +s_mC^{(m)}=\begin{bmatrix}
            A_p^{(m)} & A_C^{(m)\top} \\
            A_C^{(m)} & A_I^{(m)}
        \end{bmatrix}.
    \end{equation*} The congruence transformation is 
    \begin{equation*}
        W_m = \begin{bmatrix}
            I_{p_m}&\\
            -\left(A_I^{(m)}\right)^{-1}A_C^{(m)}& I
        \end{bmatrix},
    \end{equation*} and we can obtain
    \begin{equation*}
        \widehat{G}^{(m)} = W_m^\top G^{(m)} W_m= \begin{bmatrix}
           \widehat{G}^{(m)}_{p} &  \widehat{G}^{(m)\top}_{C} \\
           \widehat{G}^{(m)}_{C} &  \widehat{G}^{(m)}_{I} 
        \end{bmatrix};
    \end{equation*}
    \begin{equation*}
        \widehat{C}^{(m)} = W_m^\top C^{(m)} W_m= \begin{bmatrix}
           \widehat{C}^{(m)}_{p} &  \widehat{C}^{(m)\top}_{C} \\
           \widehat{C}^{(m)}_{C} &  \widehat{C}^{(m)}_{I} 
        \end{bmatrix}.
    \end{equation*}  Perform the congruence transformation on $\Sigma_{m-1}^{(2)}$, and the original system can be seen as the cascade of $m$ small systems
    \begin{equation*}
      \Sigma_1^{(1)}:\left\{\begin{aligned} &\widehat{G}^{(1)}_{p}x^{(1)}_p(s) + s\widehat{C}^{(1)}_{p}x^{(1)}_p(s) = B^{(1)}u(s)+u^{(1)}(s)\\
      &y(s) = B^{(1)\top} x^{(1)}_p(s)
      \end{aligned}\right.;
    \end{equation*}
    \begin{equation*}
      \Sigma_2^{(1)}:\left\{\begin{aligned} &\widehat{G}^{(2)}_{p}x^{(2)}_p(s) + s\widehat{C}^{(2)}_{p}x^{(2)}_p(s) = -B^{(2)}u_2(s)+u^{(2)}(s)\\
      &y^{(2)}(s) = -B^{(2)\top} x^{(2)}_p(s)
      \end{aligned}\right.;
    \end{equation*}
    \begin{equation*}
      \vdots
    \end{equation*}
    \begin{equation*}
      \Sigma_m^{(1)}:\left\{\begin{aligned} &\widehat{G}^{(m)}_{p}x^{(m)}_p(s) + s\widehat{C}^{(m)}_{p}x^{(m)}_p(s) =\\& -B^{(m)}u_m(s)+u^{(m)}(s)\\
      &y^{(m)}(s) = -B^{(m)\top} x^{(m)}_p(s)
      \end{aligned}\right.;
    \end{equation*}
    and a large subsystem
     \begin{equation*}
      \Sigma_m^{(2)}:\left\{\begin{aligned} &\widehat{G}^{(m)}_{I}x^{(m)}_I(s) + s\widehat{C}^{(m)}_{I}x^{(m)}_I(s) = -C_C^{(m)}u_{m+1}(s)\\
      &y^{(m+1)}(s) = B^{(m)\top} x^{(m)}_p(s)
      \end{aligned}\right..
    \end{equation*}
\section{Proof of Multi-Frequency Points Matching}
\label{app:convergence}

After $m-1$ eliminations, the original system can be seen as the cascade of $m$ systems represented by \cref{Sigma_11,Sigma_21,Sigma_m1,Sigma_m2}.
    The transfer function of $\Sigma_2^{(m-1)}$ is denoted by 
    \begin{align*}
        H^{(m)}(s) = B_m^{\top} \left(G^{(m)} + sC^{(m)}\right)^{-1} B_m. 
    \end{align*}
    The transfer function of the system obtained by cascading $\Sigma_1^{(1)}, \ldots, \Sigma_1^{(m-1)}$ is denoted by $H^{(i)}(s),\vspace{1em}i=1,2,\cdots,m-1$. For system $\Sigma_1^{(m-1)}$, we have 
    \begin{align*}
        &\left(\widehat{G}_{p}^{(m-1)} + s \widehat{C}_{p}^{(m-1)}\right) x_p^{(m-1)} \\
        &=-B^{(m-1)}u_{m-1}(s) + u^{(m-1)}(s)\\
        &=-B^{(m-1)}u_{m-1}(s) + (s-s_{m-1})y^{(m)}\\
        &=-B^{(m-1)}u_{m-1}(s) + (s-s_{m-1})H^{(m)}(s)u_m(s)\\
        &= -B^{(m-1)}u_{m-1}(s) + (s-s_{m-1})^2H^{(m)}(s)x^{(m-1)}_p(s).
    \end{align*}
    Then we can obtain \cref{x_pm-1}. Therefore, we can represent the transfer function of system $\Sigma_1^{(m-1)}$ as
    \cref{H^m-1}.
    \begin{figure*}[h]
    \begin{equation}\label{x_pm-1}
        x_p^{(m-1)}(s)=-\left(\widehat{G}_{p}^{(m-1)} + s \widehat{C}_{p}^{(m-1)}-
        (s-s_{m-1})^2H^{(m)}(s)\right)^{-1}B^{(m-1)}u_{m-1}(s)
    \end{equation}
    \end{figure*}
    \begin{figure*}[h]
    \begin{equation}\label{H^m-1}
        H^{(m-1)}(s)=B^{(m-1)\top}\left(\widehat{G}_{p}^{(m-1)} + s \widehat{C}_{p}^{(m-1)}-
        (s-s_{m-1})^2H^{(m)}(s)\right)^{-1}B^{(m-1)}.
    \end{equation}
    \end{figure*}
    Then we have the recurrence relation 
    \begin{align*}
        &H^{(i)}(s) = B^{(i)\top} \left(\widehat{G}_{p}^{(i)} + s \widehat{C}_{p}^{(i)} - (s-s_{i})^2 H^{(i+1)}(s)\right)^{-1} B^{(i)}, \\ 
        & i = 1, \ldots, m-1. 
    \end{align*}

    For the reduced model , the $m$ subsystems are represented as \cref{Sigma_11,Sigma_21,Sigma_m1}.
    % \begin{figure}
    %  \begin{align}\label{Systems_R}
    %     & \Sigma_1^{(1)}: \left\{\begin{aligned}
    %         & \left(\widehat{G}_{p}^{(1)} + s \widehat{C}_{p}^{(1)}\right) x^{(1)}_p = B^{(1)} u(s) + u^{(1)}(s), \\
    %         & y = B^{(1)\top} x^{(1)}_p, 
    %     \end{aligned}\right. \\
    %     & \Sigma_1^{(2)}: \left\{\begin{aligned}
    %         & \left(\widehat{G}_{p}^{(2)} + s \widehat{C}_{p}^{(2)}\right) x_p^{(2)} = -B^{(2)}u_2(s) + u^{(2)}(s), \\
    %         & y^{(2)} = -B^{(2)\top} x_p^{(2)}, 
    %     \end{aligned}\right. \\
    %     & \quad \quad \vdots \nonumber\\
    %     & \Sigma_1^{(m-1)}: \left\{\begin{aligned}
    %         & \left(\widehat{G}_{p}^{(m-1)} + s \widehat{C}_{p}^{(m-1)}\right) x_p^{(m-1)} =\\
    %         &-B^{(m-1)}u_{m-1}(s) + u^{(m-1)}(s), \\
    %         & y^{(m-1)} = -B^{(m-1)\top} x_p^{(m-1)}, 
    %     \end{aligned}\right. \\
    %     & \Sigma_{m}^{(1)}:\left\{\begin{aligned} &\widehat{G}_{p}^{(m)}x^{(m)}_p(s) + s\widehat{C}_{p}^{(m)}x^{(m)}_(s) = -B^{(m)}u_m(s)\\
    %   &y^{(m)}(s) = -B^{(m)\top} x^{(m)}_p(s)
    %   \end{aligned}\right. 
    % \end{align}
    % \end{figure}
    we have a similar recurrence relation 
    \begin{align*}
        & \widehat{H}^{(m)}(s) = B^{(m)\top} \left(\widehat{G}_{p}^{(m)} + s\widehat{C}_{p}^{(m)}\right)^{-1} B^{(m)}, \\
        & \widehat{H}^{(i)}(s) = B^{(i)\top} \left(\widehat{G}_{p}^{(i)} + s \widehat{C}_{p}^{(i)} - (s-s_i)^2 \hat{H}^{(i+1)}(s)\right)^{-1} B^{(i)}, \\
        &i = 1, \ldots, m-1. 
    \end{align*}

    For any given $s_0$, define 
    \begin{align*}
        \delta_i = \begin{cases}
            1, & \text{if } s_i = s_0,  \\
            0, & \text{otherwise}. 
        \end{cases}
    \end{align*}
    Beginning with $\widehat{H}^{(m)}(s) - H^{(m)}(s) = o\left((s-s_0)^{2\delta_{q}-1}\right)$, we can inductively obtain 
    \begin{align*}
        \widehat{H}^{(i)}(s) - H^{(i)}(s) = o((s-s_0)^{2\sum_{j=i}^q \delta_j - 1}), \quad i = 1, \ldots, m-1. 
    \end{align*}
    This is because 
    \begin{align*}
        & \widehat{H}^{(i)}(s) - H^{(i)}(s) \\
        = &\ B^{(i)\top} \left(\widehat{G}_{p}^{(i)} + s \widehat{C}_{p}^{(i)} - (s-s_i)^2 \widehat{H}^{(i+1)}(s)\right)^{-1} B^{(i)} -\\
        &B^{(i)\top} \left(\widehat{G}_{p}^{(i)} + s \widehat{C}_{p}^{(i)} - (s-s_{i})^2 H^{(i+1)}(s)\right)^{-1} B^{(i)} \\
        = &\ (s-s_i)^2 B^{(i)\top} \left(\widehat{G}_{p}^{(i)} + s \widehat{C}_{p}^{(i)} - (s-s_i)^2 \hat{H}^{(i+1)}(s)\right)^{-1}\\ &\left(H^{(i+1)}(s) - \hat{H}^{(i+1)}(s)\right)\\
        &\left(\widehat{G}_{p}^{(i)} + s \widehat{C}_{p}^{(i)} - (s-s_{i})^2 H^{(i+1)}(s)\right)^{-1} B^{(i)}.
    \end{align*}
    Finally, we have 
    \begin{align*}
        \widehat{H}^{(1)}(s) - H^{(1)}(s) &
        = o\left((s-s_0)^{2\sum_{j=1}^q \delta_j - 1}\right)\\
        &= o\left((s-s_0)^{2q(s_0)-1}\right), 
    \end{align*}
    which completes the proof.
%\section{Proof of Theorem 2}
%\label{app:gssp}
%\input{content/app_B.tex}

%\section{Proof of Error Analysis}
%\label{app:convergence}

% \begin{thebibliography}{1}
%         \bibliographystyle{plain}
%         \input{main.bib}
% \end{thebibliography}
\bibliographystyle{unsrt}
\bibliography{bibfile}

\end{document}